\documentclass[12pt]{article}
\usepackage[cp1251]{inputenc}
\usepackage[russian]{babel}
\usepackage{amsmath}
\usepackage{amssymb}
\usepackage{amscd}
\usepackage{amsfonts}
\usepackage[dvips]{graphicx}
\usepackage{wrapfig}
\usepackage{pstricks}
\usepackage{enumerate}
\usepackage{pst-plot}
\usepackage[matrix,arrow,curve]{xy}

\newenvironment{proof}{\noindent{\it Proof.\ }}{\hfill$\square$}

\newcommand{\beq}[1]{\begin{equation}\label{#1}}
\newcommand{\eq}{\end{equation}}
\newcommand{\beqn}[1]{\begin{eqnarray}\label{#1}}
\newcommand{\eqn}{\end{eqnarray}}

\def\sl2{{\rm sl}(2, {\mathbb C})}

\def\f1#1{\frac{1}{#1}}

\def\f1#1{\frac{1}{#1}}

\newtheorem{prop}{Proposition}[section]
\newtheorem{defi}{Definition}[section]
\newtheorem{rem}{Remark}[section]
\newtheorem{cor}{Corollary}[section]
\newtheorem{lem}{Lemma}[section]
\newtheorem{theorem}{Theorem}[section]

\sloppy
\begin{document}
\begin{flushright}
ITEP-TH-45/12
\end{flushright}

\vspace{2cm}
\begin{center}
{\Large{\bf Bruhat Order in Full Symmetric Toda System                                                                       }
}\\
\ \\
Yu.B. Chernyakov\footnote{Institute for Theoretical and Experimental Physics, Bolshaya Cheremushkinskaya, 25,
117218, Moscow, Russia.}$^{,}$\footnote{Joint Institute for Nuclear Research, Bogoliubov Laboratory of Theoretical Physics, 141980, Dubna, Moscow region, Russia.}, chernyakov@itep.ru\\
G.I Sharygin\footnotemark[1]$^{,}$\footnotemark[2]$^{,}$\footnote{Lomonosov
Moscow State University, Faculty of Mechanics and Mathematics, GSP-1, 1 Leninskiye Gory, Main Building, 119991, Moscow, Russia.}$^{,}$\footnote{Laboratory of Discrete and Computational Geometry, Yaroslavl' State University, Sovetskaya st.14, 150000, Yaroslavl', Russia.}, sharygin@itep.ru\\
A.S. Sorin\footnotemark[2], sorin@theor.jinr.ru

\end{center}
\renewcommand{\abstractname}{Abstract}
\renewcommand{\figurename}{Figure}
\renewcommand{\refname}{Literature}
\begin{abstract}
In this paper we discuss some geometrical and topological properties of the full symmetric Toda system. We show by a direct inspection that the phase transition diagram for the full symmetric Toda system in dimensions $n=3,4$ coincides with the Hasse diagram of the Bruhat order of symmetric groups $S_3$ and $S_4$. The method we use is based on the existence of a vast collection of invariant subvarieties of the Toda flow in orthogonal groups. We show how one can extend it to the case of general $n$. The resulting theorem identifies the set of singular points of $\mathrm{dim}=n$ Toda flow with the elements of the permutation group $S_n$, so that points will be connected by a trajectory, if and only if the corresponding elements are Bruhat comparable. We also show that the dimension of the submanifolds, spanned by the trajectories connecting two singular points, is equal to the length of the corresponding segment in the Hasse diagramm. This is equivalent to the fact, that the full symmetric Toda system is in fact a Morse-Smale system.
\end{abstract}

\section{Introduction}
Non-periodic Toda system (Toda chain) consists of $n$ particles on the line with interactions between neighbours.
The Hamiltonian of this system is given by
\beq{H-Toda}
H = \sum^{n}_{i=1}\frac{1}{2}p_{i}^{2} + \sum^{n-1}_{i=1} \exp(q_{i}-q_{i+1}),
\eq
where $p_{i}$ is the momentum of the $i^{th}$ particle and $q_{i}$ is its coordinate. The Poisson structure on the phase space $(p_{i}, q_{i})$ has the well-known form
\beq{Poiss-str-1}
\{ p_{i}, q_{j} \} = \delta_{ij}, \ \{ p_{i}, p_{j} \} = 0, \ \{ q_{i}, q_{j} \} = 0.
\eq
The evolution of the system is given by the standard Hamiltonian equations: $p'_i=\{H,\,q_i\},\ q'_i=-\{H,\,p_i\}$.

If we make the following ansatz ( \cite{F1})
\beq{Var}
b_{i} = p_{i}, \ \ \ a_{i} = \exp \frac{1}{2}(q_{i}-q_{i+1}),
\eq
the Hamiltonian will take the form
\beq{H-Toda-2}
H = \sum^{n}_{i=1}\frac{1}{2}b_{i}^{2} + \sum^{n-1}_{i=1} a_{i}^{2}.
\eq
Observe, that there are only $n-1$ variables $a_i$. The Poisson structure (\ref{Poiss-str-1}) turns into
\beq{Poiss-str-2}
\{ b_{i}, a_{i-1} \} = -a_{i-1}, \ \{ b_{i}, a_{i} \} = a_{i}.
\eq
All the other brackets of coordinates $a_i$ and $b_j$ are equal to zero. In these coordinates it is easy to find the Lax representation of the system: one can show that the Hamiltonian equations are equivalent to the following matrix equation
\beq{LM}
\dot{L} = [L,A]
\eq
where $L$ is called the \textit{Lax operator} and it is given by
\beq{Lax} L = \left(
\begin{array}{c c c c c c}
 b_{1} & a_{1} & 0 & ... &0\\
 a_{1} & b_{2} & a_{2} &... & 0\\
 0 & ... & ... & ...& 0\\
 0 & ... & a_{n-2} & b_{n-1} &a_{n-1}\\
 0 & 0 & ... &  a_{n-1} & b_{n}\\
\end{array}
\right)
\eq
and $A$ is the operator
\beq{Lax} A = \left(
\begin{array}{c c c c c c}
 0 & -a_{1} & 0 & ... &0\\
 a_{1} & 0 & -a_{2} &... & 0\\
 0 & ... & ... & ...& 0\\
 0 & ... & a_{n-2} & 0 & -a_{n-1}\\
 0 & 0 & ... & a_{n-1} & 0\\
\end{array}
\right).
\eq
This system was first considered in \cite{T1, T2}, and in the work \cite{H} there were found $n$ functionally independent integrals of motion. The involution of the integrals of motion was proved in the works \cite{F1, F2}. In particular, it is easy to see that the trace of $L$ is a Casimir function of the system, so that we can set it equal to $0$, without any loss of generality. Thus we can assume, that $Tr(L)=0$.

There is a strightforward way to generalize these formulas: as one sees the Lax matrix of this system is tri-diagonal traceless symmetric matrix and $A$ is the tri-diagonal anti-symmetric matrix, obtained from $L$ by the natural procedure of deleting the principal diagonal and changing the signs of the elements above it. Now we can introduce a new system by taking $L$ to be an arbitrary traceless symmetric matrix and $A$ its anti-symmetrization and defining the dynamics by equation \eqref{LM}. This equation is called the \textit{full symmetric Toda system}. Below (see paragraph \ref{Toda_setting}) we shall describe this system in more detail.

As it follows from equation \eqref{LM}, eigenvalues $\lambda_i$ of $L$ are invariants of the motion. Let us assume, that all these numbers are different. Moser in \cite{Mos} showed that at $t \rightarrow -\infty$ the Lax operator of the usual tri-diagonal Toda lattice converges to the diagonal matrix with eigenvalues put in the increasing order, and when $t \rightarrow +\infty$ it converges to the diagonal matrix with decreasing eigenvalues. This property has been further studied in \cite{DNT}, where a general framework for calculating the eigenvalues of a symmetric matrix using the Toda equations was developed. The physical interpretation of this result is the full momentum transfer between $n$ particles (see, for example, \cite{Mos2}). The analogy of this property emerges in the system of (for example) two balls on a line. One of the balls move with the momentum $p$ another ball has zero momentum. After elastic impact the balls change momentum. This system in the center of mass frame is analogous to the Toda system with the $2 \times 2$ matrix of the Lax operator.

In \cite{KM} it was shown that this property is also true for the full symmetric Toda system and in \cite{FS} the transitions between the stable points of this equation in the case $n=3$ for the full symmetric Toda system were described with the help of the explicit formulas for the generic solutions.

In present paper we study the behavior of the Lax operator of the full symmetric Toda system at $t \rightarrow \pm\infty$. It turns out, that it always converges to a diagonal matrix with fixed eigenvalues, whose order at $\pm\infty$ depends on the choice of the initial data at $t=0$ of the trajectory. More accurately, we prove the following theorem:

\medskip
\noindent\textbf{Theorem \ref{theomag}.}\ \textit{The singular points of the full symmetric Toda flow on $SO_n(\mathbb R)$ can be identified with the elements of the symmetric group $S_n$, so that two points are connected by a trajectory if and only if they are comparable in Bruhat order on $S_n$ (after suitable permutation corresponding to the chosen order of eigenvalues).}

\medskip
In this theorem we use the identification of the group $S_n$ with the group of permutations of the eigenvalues of $L$. Thus, we can describe all possible asymptotics of the Lax operator: it turns out that we cannot obtain any given permutation of the eigenvalues, when $t\to\pm\infty$. In effect, the only permutations that are allowed are those, which go along the edges of the Hasse diagramm of the Bruhat order. Moreover, we can describe the dimension of the space of trajectories, that define such permutation: we prove the following

\medskip
\noindent\textbf{Corollary \ref{corMS}.}\ \textit{The dimension of the space, spanned by the trajectories, connecting two singular points is equal to the distance in the Hasse graph of the Bruhat order between these points.}

\medskip
In fact, we show that the system can be described as a \textit{Morse-Smale system} on the flag manifold, so that its stable (resp. unstable) submanifolds are equal to the dual (resp. ordinary) Schubert cells.

The rest of the paper is organized as follows: in section 2 we give synopsis of the facts from dynamical systems, Morse theory and the Schubert calculus, necessary for understanding the rest of the paper. In section 3, which consists of three subsections, we prove the main theorem. We begin with the simplest cases, $n=3$ and $n=4$ (paragraphs 3.1 and 3.2) and in paragraph 3.3 we prove the main results of this paper (theorem \ref{theomag} and corollary \ref{corMS}). Finally in appendix we describe the topology of the full flag space of $\mathbb R^3$.

\section{Definitions and conventions}
This section contains standard results and definitions from various mathematical theories, which we use in this paper. This section is also the place, where we set notation and describe the conventions we use.

\subsection{Generalised Toda flows}
\label{secttodagen}
In these two subsections we shall recall the basics on the full symmetric Toda flow.

\subsubsection{General results}
\label{Toda_setting}
As we have already mentioned in the introduction, it is possible to generalize the Toda system (tri-diagonal Toda chain) and to consider the full symmetric Toda system. It turns out that this system is in effect a Hamiltonian system: one can regard it as the dynamical system on the orbits of the coadjoint action of the Borel subgroup $B^+_n$ of $SL_n(\mathbb R)$ (equal to the group of upper triangular matrices with determinant $1$) see \cite{A, Ad, K1, S1}. Namely, consider the following decomposition (we use the Killing form on $\mathfrak{sl}_n$ to obtain the identifications in the second and the third lines):
\beq{Decomp-2}
\begin{array}{c}
\mathfrak{sl}_n=\mathfrak{so}_n \oplus \mathfrak{b}^+_n,\\
\mathfrak{sl}^{\ast}_{n} = (\mathfrak{b}^{+}_n)^{\ast} \oplus (\mathfrak{so}_n)^{\ast} \cong Symm_n \oplus \mathfrak{n}_n^{+},\\
(\mathfrak{b}^{+}_n)^{\ast} \cong (\mathfrak{so}_n)^{\perp} = Symm_n, \ \ \ (\mathfrak{so}_n)^{\ast} \cong (\mathfrak{b}^{+}_n)^{\perp} = \mathfrak{n}_n^{+}.
\end{array}
\eq
As one sees, we can identify the space of symmetric matrices with the dual space of Lie algebra of Borel subgroup: $Symm_n\cong(\mathfrak{b}_n^+)^*$, and hence we can introduce a symplectic structure on $Symm_n$, pulling it back from $(\mathfrak{b}_n^+)^*$. We define a Hamiltonian function on $Symm_n$ by the formula $H(L)=Tr(L^2), L \in Symm_n$. Then the full symmetric Toda system is just the Hamiltonian system, associated with this data.

For example, if $n=3$ we shall have
$$
L=\begin{pmatrix}a_{11} & a_{12} & a_{13}\\ a_{12} & a_{22} & a_{23}\\ a_{13} & a_{23} & a_{33}\end{pmatrix}\ \mbox{then}\ A=\begin{pmatrix}0 & a_{12} & a_{13}\\ -a_{12} & 0 & a_{23}\\ -a_{13} & -a_{23} & 0\end{pmatrix},
$$
where $A=A(L)$ is the antisymmetric matrix, determined by $L$: $A(L)=L_+-L_-$ (the above-diagonal part of $L$ minus the under-diagonal part of it).

It turns out, that the Hamiltonian system defined in this way is integrable. Some of the integrals of this system are easy to guess: take $H_k(L)=Tr(L^k),\ k=1,\dots,n$ (so that $H(L)=H_2(L)$), then $\{H_i,\,H_j\}=0$, which is proved by direct computation. However, this set of integrals is not complete (in a generic point): dimension of the Poisson manifold is $\dfrac{n(n+1)}{2}$, and the number of integrals is only $n$; thus one should look for additional invariants. First they were found in \cite{DLNT} using the chopping procedure. Another (geometric) construction of the integrals was proposed by the first and the third authors of this paper in \cite{CS}. Their construction is based on the Pl\"ucker embedding of the flag space into products of projective spaces.

In effect, similar constructions can be considered for arbitrary Cartan pairs (decompositions of a semisiple Lie algebra into a direct sum of its compact sublagebra and a subspace which verify certain properties). For any such decomposition one can associate an integrable system analogous to the Toda chain. In this paper we shall restrict our attention to the usual $SL_n$ case. A reader interested in the general case of arbitrary Lie group, should refer for example to the paper \cite{deMariPedroni}.

The evolution of $L$ determined by the Lax equation \eqref{LM} can be described in the terms of orthogonal group: first of all, the equation suggests that the set of eigenvalues of $L$ does not vary with time (although the order of these values need not be preserved), this is equivalent to the invariance of the traces $Tr(L^k)$. Let $\Lambda=diag(\lambda_1,\dots,\,\lambda_n)$, $\sum\lambda_i=0$ be the matrix of eigenvalues of $L$. Then one can find an orthogonal matrix $\Psi=\Psi(t)$, such that
\beq{Lax-GenToda}
L(t) = \Psi(t) \Lambda \Psi^{-1}(t), \ \Psi(t) \in SO(n, \mathbb{R}).
\eq
The matrix $\Psi$ evolves in time according to the law:
\begin{equation}
\label{eqso3}
\frac{d\Psi}{dt} = - A \Psi,
\end{equation}
where now we put $A=A(\Psi)$ is the composition of $A(L)$ with the expression \eqref{Lax-GenToda}:
\begin{equation}
\label{fieldso3}
A(\Psi) = (\Psi \Lambda \Psi^{-1})_{+} - (\Psi \Lambda \Psi^{-1})_{-}.
\end{equation}
For instance, when $n=3$ we have the following expressions for the matrix elements $a_{ij}$ of $L$ and $A$ in terms of the matrix elements $\psi_{ij}$ of $\Psi$ and the eigenvalues:
\beq{matrixelements}
a_{ij} = \sum_{k=1}^{n} \lambda_{k} \psi_{ik} \psi_{jk}.
\eq
It is worthwhile to write down the additional integral of the system in terms of $\psi_{ij}$ and $\lambda_i$: for $n=3$ there is only one additional integral
$$
I_{1,1} = \lambda_{1}\lambda_{2} \frac{\psi_{13}\psi_{33}}{a_{13}}+\lambda_{1}\lambda_{3} \frac{\psi_{12}\psi_{32}}{a_{13}}+\lambda_{2}\lambda_{3} \frac{\psi_{11}\psi_{31}}{a_{13}}.
$$
One can show, that this function is in fact a Casimir in this case.

In fact, for any $n$ the additional integrals can be chosen to be equal to the rational expressions of suitable minors $M_{\frac{i_1\ldots i_k}{1\ldots k}}$ and $M_{\frac{i_1\ldots i_k}{n-k+1\ldots n}}$ of the matrix $\Psi$. Here $M_{\frac{i_1\ldots i_k}{1\ldots k}}$ (resp. $M_{\frac{i_1\ldots i_k}{n-k+1\ldots n}}$) equal to the determinant of the submatrix of $\Psi$, spanned by the intersections of first (resp. last) $k$ rows and by arbitrary $k$ columns $i_1,\dots,i_k$ of $\Psi$. The equations of motion of these minors have the following form
\beq{minors}
M^{'} = f(\lambda, \psi) M,
\eq
where $f$ is some regular function of $\lambda$ and $\psi$ and $M$ is such a minor (we omit indices for the sake of brevity).

So, the level set $M=0$ of $M$ is an invariant subvariety of the Toda system. We shall sometimes call these surfaces \textit{the minor surfaces of the Toda flow}. For instance in the case $n=3$ there are six such surfaces, which correspond to six $1 \times 1$ submatrices and when $n=4$ there are eight distinct $1 \times 1$ minors and six $2 \times 2$ minors, which give invariant surfaces. Details can be found in \cite{FS} and \cite{CS} for the case of full symmetric matrix of the Lax operator, see also \cite{EFS} where the Hessenberg matrices of the Lax operator are considered.

\subsubsection{Potential function}
\label{Toda_grad}
An important property of the system we discuss here is, that it is gradient on the level of the orthogonal matrices, i.e. there is a function $F(\Psi)$, such that its gradient, with respect to certain Riemannian structure on $SO_n(\mathbb R)$, is equal to $A\Psi$; this function also depends on the matrix of eigenvalues $\Lambda$. The corresponding functions and symmetric forms on $\mathfrak{so}_n$ are described, for instance, in \cite{BBR},\cite{BG}, and in paper \cite{deMariPedroni} the general case of an arbitrary Cartan pair is described. It truns out that in the case we consider here the matrix $A$ can be written down in the form
$$
A=J([L,N]),
$$
where $N$ is a symmetric matrix, determined by the root system of $SL_n(\mathbb R)$. The matrx $N$ is not uniquely defined; for instance, one can take $N=diag(0,1,\dots,n-1)$; the operator $J$ is an invertable symmetric linear operator on the space of antisymmetric matrices $\mathfrak{so}_n$; $J$ acts by the division by $k$ on the $k$\/-th upper- and lower-diagonals of an antisymmetric matrix. The operator $J$ is used in the definition of the Riemannian structure: one puts
$$
\langle A,\,B\rangle_J=\langle A,\,J^{-1}(B)\rangle,
$$
where $A,\,B\in\mathfrak{so}_n$ are arbitrary antisymmetric matrices and $\langle A,\,B\rangle$ is the Killing form on $\mathfrak{so}_n$, i.e. in our case
$$
\langle A,\,B\rangle_J=-Tr(AJ^{-1}(B)).
$$
Using this Riemannian structure we can represent $A(\Psi)\Psi$ as the gradient of the function
$$
F_n(\Psi)=Tr(LN)=Tr(\Psi\Lambda\Psi^{-1}N).
$$
In fact, if we take $\Psi=(1+\Theta)\Psi_0$, where $\Theta\in\mathfrak{so}_n$ is an infinitesimal deformation of $\Psi$, then we can linearize the function $F_n$ near $\Psi_0$:
$$
\begin{aligned}
F_n(\Psi)&=Tr((1+\Theta)\Psi_0\Lambda\Psi_0^{-1}(1-\Theta)N)\\
         &=F(\Psi_0)+Tr(\Theta\Psi_0\Lambda\Psi_0^{-1}N)-Tr(\Psi_0\Lambda\Psi_0^{-1}\Theta N)+o(\Theta)\\
         &=F(\Psi_0)+Tr(\Theta[\Psi_0\Lambda\Psi_0^{-1},N])+o(\Theta)\\
         &=F(\Psi_0)+Tr(\Theta J^{-1}(J([\Psi_0\Lambda\Psi_0^{-1},N])))+o(\Theta)\\
         &=F(\Psi_0)-\langle\Theta,\,J([\Psi_0\Lambda\Psi_0^{-1},N])\rangle_{J}+o(\Theta),
\end{aligned}
$$
whence the result we need follows directly, since by an easy computation $J[\Psi_0\Lambda\Psi_0^{-1},N]=A(\Psi_0)$, i.e. $grad_{\langle,\rangle_J} F_n(\Psi_0)=-A(\Psi_0\Lambda\Psi_0^{-1})\Psi_0$.

It is instructive to write down explicit formulae for $F_n$. For instance, if $n=3$, we can take
$$
N=\begin{pmatrix}-1 & 0 & 0\\
 0 & 0 & 0\\
 0 & 0 & 1\end{pmatrix},
$$
(although this choice of $N$ differs from the one we gave above) and
\beq{F-Morse-2}
\begin{array}{c}

F_3(L)=Tr(LN)=a_{33}-a_{11},\\

\ \\

F_3(\lambda, \psi) = \lambda_{1}(\psi^{2}_{31}-\psi^{2}_{11}) + \lambda_{2}(\psi^{2}_{32}-\psi^{2}_{12}) + \lambda_{3}(\psi^{2}_{33}-\psi^{2}_{13}).

\end{array}
\eq

\subsection{Basics of the Morse theory}
In this section we recall the most basic definitions and results from the Morse theory. We warn the reader at once, that we shall rather need the geometric part of this theory which deals with the local structure of the trajectories of a gradient vector field, and not its topological consequences which constitute the major contribution of this theory to Mathematics.

So let $M$ be a compact smooth $n$\/-manifold without boundary. Recall, that a point $x\in M$ is called \textit{singular} point of a function $f:M\to\mathbb R$, if $df(x)=0$ (here $df$ is the differential of $f$). Then $f$ is called \textit{Morse function}, if the following three conditions hold:
\begin{enumerate}[{\ \ 1)}]
\item $f$ has only discrete set of singular points;
\item the values of $f$ in its singular points are different;
\item all its singular points are non-degenerate.
\end{enumerate}
In effect, the first condition follows from the third, but we shall not need this. Recall, that a singular point $x_0\in M$ of $f$ is called \textit{regular}, or \textit{nondegenerate}, if its Hesse matrix
$$
d^2f(x_0)=\begin{pmatrix}\frac{\partial^2f}{\partial x_1^2} & \frac{\partial^2f}{\partial x_1\partial x_2} & \dots &
												 \frac{\partial^2f}{\partial x_1\partial x_n}\\
                         \frac{\partial^2f}{\partial x_2\partial x_1} & \frac{\partial^2f}{\partial x_2^2} & \dots &
                         \frac{\partial^2f}{\partial x_2\partial x_n}\\
                         \vdots & \vdots & \ddots & \vdots\\
                         \frac{\partial^2f}{\partial x_n\partial x_1} & \frac{\partial^2f}{\partial x_n\partial x_2} & \dots &
                         \frac{\partial^2f}{\partial x_n^2}
          \end{pmatrix}_{|x=x_0}
$$
is nondegenerate in some coordinate system (and hence in any coordinate system).

Hesse matrix is a real symmetric matrix, and as one knows from linear algebra, its only invariant is the number of positive and negative squares in the canonic form of the corresponding quadratic function:
$$
q(\xi_1,\dots,\xi_n)=\sum_{i,j=1}^n\frac{\partial^2f}{\partial x_i\partial x_j}\,\xi_i\xi_j=\sum_{k=1}^n\varepsilon_k\theta_k^2
$$
where $\theta_k$ are new coordinates and $\varepsilon_k=\pm1$. It is this invariant, called \textit{the index} of the point $x_0$ (which is equal to the number of $-1$ in the sequence $(\varepsilon_1,\dots,\varepsilon_n)$) which plays crucial role in the whole Morse theory.

An important stage in the development of the Morse theory is the following \textit{Morse lemma}:
\begin{lem}
\label{Morlem}
Let $f$ be a smooth function on a manifold $M$ and $x_0$ a nondegenerate singular point of $f$ and $k$ its index. Suppose that there exists an open neighborhood $U$ of $x_0$, which contains no other singular points. Then theere exists a coordinate system (possibly on a smaller open neighborhood $V$ of $x_0$), $\theta_1,\dots,\theta_n$, such that $x_0$ has coordinates $(0,\dots,0)$ in this system and
$$
f|_V\equiv f(x_0)-\theta_1^2-\dots-\theta_k^2+\theta_{k+1}^2+\dots+\theta_n^2.
$$
\end{lem}
Let now $g_{ij}$ be a Riemannian structure on $M$. Then one can consider the gradient vector field of the function $f$:
$$
(\nabla f)^i=g^{ij}\frac{\partial f}{\partial x_j},
$$
where $g^{ij}$ is the inverse matrix of $g_{ij}$. It is clear, that the singular points of this field coincide with those of the function (since the gradient is obtained by raising the indices of the differential). Moreover, \textit{the coordinate system $\theta_1,\dots,\theta_n$ can be chosen so that the Riemanian structure $g_{ij}$ is trivial in the point $x_0$} (this result is sometimes referred to as the \textit{Morse-Palais lemma}). Thus we can draw the local picture of the trajectories of this vector field near a singular point (in fact the trajectories will correspond to the trajectories of the corresponding field on a Euclidean space up to an infinitesimal correction term). Thus in dimension $2$ for instance we obtain one of the pictures from figure \ref{figvfields}.
\begin{figure}

\vspace{-2cm}
\begin{center}
\hspace{.5cm}\includegraphics[scale=.8]{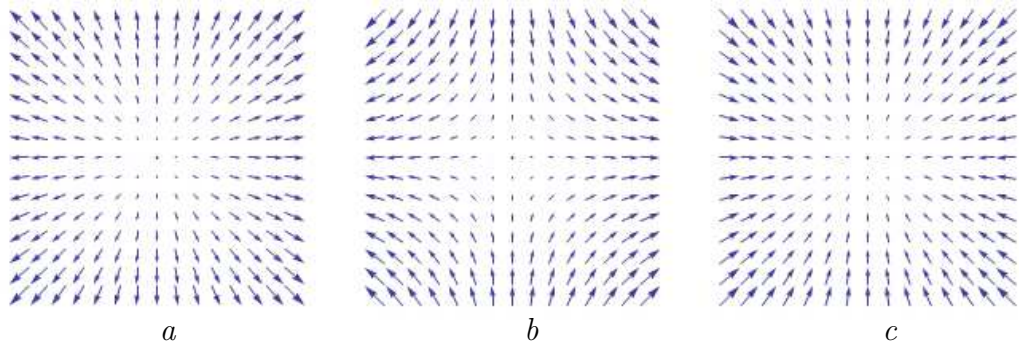}\\

\vspace{-1.8cm}
\hspace{1.1cm}{\it a}\hspace{4.5cm} {\it b} \hspace{4.3cm} {\it c}\\
\end{center}

\caption{\label{figvfields}Types of singularities: {\it a} -- source, {\it b} -- saddle point, {\it c} -- sink.}
\end{figure}

In general we see that the following statement holds
\begin{prop}
\label{Morthtr}
If the index (number of $-1$'s in the canonic form) of a singular point is equal to $k$, then the trajectories that enter this point span (locally) a $k$\/-dimensional submanifold.
\end{prop}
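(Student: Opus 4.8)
The plan is to reduce the assertion to the case of a linear vector field and then restore the nonlinear terms by the stable manifold theorem. I will work with the gradient flow of $f$, written as $\dot x=\nabla f$, and say that a trajectory $\gamma(t)$ \emph{enters} $x_0$ if $\gamma(t)\to x_0$ as $t\to+\infty$ (for the descending convention $\dot x=-\nabla f$ these are the same curves, traversed as $t\to-\infty$). This reflection symmetry is the reason why it is the number of $-1$'s --- i.e.\ the number of ``attracting'' directions at $x_0$ --- rather than the number of $+1$'s, that appears in the statement.

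\textbf{Step 1: normal form and hyperbolicity.} First I would invoke the Morse lemma (Lemma~\ref{Morlem}) together with its Morse--Palais refinement recalled above, to obtain a chart $(\theta_1,\dots,\theta_n)$ on a neighborhood $V\ni x_0$, centered at $x_0$, in which simultaneously
$$
f=f(x_0)-\theta_1^2-\dots-\theta_k^2+\theta_{k+1}^2+\dots+\theta_n^2,\qquad g_{ij}(x_0)=\delta_{ij}.
$$
In these coordinates $\partial f/\partial\theta_j=2\varepsilon_j\theta_j$ is \emph{exactly} linear ($\varepsilon_j=-1$ for $j\le k$, $\varepsilon_j=+1$ for $j>k$), while $g^{ij}(\theta)=\delta^{ij}+O(|\theta|)$; hence
$$
(\nabla f)^i=g^{ij}\frac{\partial f}{\partial\theta_j}=2\varepsilon_i\theta_i+R^i(\theta),\qquad R(\theta)=O(|\theta|^2),
$$
so the linearization of $\nabla f$ at $x_0$ is $2\,\mathrm{diag}(\varepsilon_1,\dots,\varepsilon_n)$, a hyperbolic matrix with exactly $k$ negative eigenvalues (equal to $-2$) and $n-k$ positive ones. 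Flatness of the metric is only a convenience: in an arbitrary chart the linearization equals $2G^{-1}E$ with $G^{-1}=(g^{ij}(x_0))$ positive definite and $E=\mathrm{diag}(\varepsilon)$, and $2G^{-1}E$ is conjugate to $2G^{-1/2}EG^{-1/2}$, which by Sylvester's law of inertia has the same signature as the Hessian; so this step goes through for any Riemannian structure.

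\textbf{Step 2: stable manifold.} Then I would apply the local stable manifold theorem of Hadamard and Perron at the hyperbolic rest point $x_0$: it provides a neighborhood $V'\subseteq V$ and an embedded $C^1$ submanifold $W^s_{\mathrm{loc}}(x_0)\subset V'$, tangent at $x_0$ to the stable eigenspace $E^s=\{\theta_{k+1}=\dots=\theta_n=0\}$, consisting precisely of the points of $V'$ whose forward trajectory stays in $V'$ and tends to $x_0$. Since $\dim E^s=k$, we get $\dim W^s_{\mathrm{loc}}(x_0)=k$. It then remains to note that, taken inside $V'$, the union of all trajectories entering $x_0$ equals $W^s_{\mathrm{loc}}(x_0)$: any entering trajectory has a tail lying in $V'$ and staying there, hence in $W^s_{\mathrm{loc}}(x_0)$, while every point of $W^s_{\mathrm{loc}}(x_0)$ lies on such a trajectory. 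This is the required $k$-dimensional submanifold.

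The only genuinely nontrivial ingredient is the stable manifold theorem, which I would use as a black box; everything else is linear algebra plus the normal forms already quoted, so I expect no real obstacle beyond that citation. Should a more self-contained treatment be wanted, the ``soft'' half can be done by hand: along a nonconstant trajectory near $x_0$ the function $\rho(\theta)=\theta_{k+1}^2+\dots+\theta_n^2$ satisfies $\dot\rho=4\rho+O(|\theta|^3)$ and is therefore strictly increasing, so no trajectory with $\rho>0$ can remain near $x_0$ in forward time --- this already forces every entering trajectory asymptotically into $\{\rho=0\}$ and fixes its limiting tangent space; promoting this to the smooth graph $W^s_{\mathrm{loc}}(x_0)$ still needs the contraction/graph-transform estimate that underlies the theorem, which is why I would simply cite the standard result.
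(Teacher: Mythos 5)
Your proof is correct and follows essentially the same route as the paper, which states Proposition~\ref{Morthtr} as an immediate consequence of the Morse and Morse--Palais normal forms without writing out a proof ("the trajectories will correspond to the trajectories of the corresponding field on a Euclidean space up to an infinitesimal correction term"). Your appeal to the Hadamard--Perron stable manifold theorem is exactly the standard citation needed to make that "infinitesimal correction" remark rigorous, and your care with the sign convention (the Toda flow is the \emph{ascending} gradient flow of $F_n$, so a point of index $k$ has a $k$-dimensional incoming manifold) matches the paper's tables for $SO_3(\mathbb R)$.
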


An important particular case of the Morse system is the case of the so-called Morse-Smale systems. Namely, let us call the \textit{stable} (respectively, \textit{unstable}) manifold of a gradient flow at a singular point $p$ of this flow the submanifold, spanned by all the trajectories, outgoing from (resp. incoming into) $p$. These submanifolds depend not only on the function, but also on the Riemannian structure and are usually denoted by $W^s_p(f,g_{ij})$ (resp. $W^u_p(f,g_{ij})$). Then we have the following definition

\begin{defi}
\label{MoSm}
One says that the gradient flow of a Morse function $f$ determines a \textbf{Morse-Smale system}, if all the stable and unstable manifolds of this flow intersect transversely.
\end{defi}

In this case one can describe the dimensions of the intersections of the stable and unstable manifolds in the terms of dimensions of these submanifolds and the dimension of the ambient manifold $M$:
$$
\dim (W^s_p(f,g_{ij})\bigcap W^u_q(f,g_{ij}))=\dim W^s_p(f,g_{ij})+\dim W^u_q(f,g_{ij})-\dim M.
$$
This formula follows from the general results on transversality. This property makes the Morse-Smale systems important tools in algebraic topology.

\subsection{Bruhat order on permutation groups and Schubert cells}
\label{sectorder}
In this section we shall recall the basics on the Bruhat order on the symmetric group and its relation to Schubert cells (for more details see, for example, \cite{F}, \cite{UW}).

Recall that a partial order on a set $X$ is a relation $x<y$ on the elements $x\neq y\in X$, which verify the following condition: if $x<y$ and $y<z$, then $x<z$. One of the convenient ways to visualize a partial order is by its \textit{Hasse diagramm}: it is the oriented graph whose vertices are the elements of the set $X$, and the edges $\overrightarrow{xy}$ correspond to the pairs $x<y$ such that there is no $z$ verifying the relation $x<z<y$. We shall say, that the elements $x$ and $y$ for which this is true are direct neighbours. Observe, that it is enough to know only direct neighbours to restore the whole partial order. The order, which appears in this way, will be called \textit{the minimal order generated by the given set of relations $x<y$}.

Let us now consider the symmetric group $S_{n}$. It consists of all the permutations $w:\{1,2,\dots,n\}\to\{1,2,\dots,n\}$. We shall denote such a permutation by the symbol $w=\begin{pmatrix}1 & 2 & \dots & n\\ i_1 & i_2 & \dots & i_n\end{pmatrix}$ or simply by $w=(i_1,i_2,\dots,i_n)$, which means that $w(k)=i_k$. One can abbreviate this notation to the so-called \textit{cyclic decomposition} of $w$, i.e. one can represent an element $w$ as a composition of non-intersecting cycles $w=(i_1,i_2,\dots,i_{k})(j_1,j_2,\dots,j_l)\dots$, where the symbol $(i_1,i_2,\dots,i_k)$ denotes the permutation $\phi$, such that $\phi(i_1)=i_2,\ \phi(i_2)=i_3,\dots,\phi(i_{k-1})=i_k,\ \phi(i_k)=i_1$, and $\phi(j)=j$ for all $j\neq i_p,\ p=1,\dots,k$ (one should not confuse this decomposition with the previously introduced notation for a permutation). In particular, $(i,j)$ will denote the simplest 2-term cycle which interchanges $i$ and $j$.

One defines the \textit{height} (or \textit{length}) $l(w)$ of a permutation $w$ as the number of inversions in $w$:
\beq{lw}
l(w) = \# \{1\le i<j\le n\mid w(i)>w(j)\}.
\eq
One can now introduce \textit{the (weak) Bruhat order on $S_{n}$} as the minimal partial order, generated by the following relations:
\beq{lw2}
x<y,\ \mbox{if and only if}\ y=(i,j)x,\ \mbox{and}\ l(y)=1+l(x).
\eq

Another approach to the definition of Bruhat order is based on the notion of \textit{Schubert cells} or \textit{Bruhat cells} and \textit{Schubert (Bruhat) varieties} in the flag manifold. Recall, that a (full) flag $E_\cdot$ in $\mathbb R^n$ is a collection of hyperplanes
$$
\{0\}=E_0\subset E_1\subset E_2\subset\dots\subset E_{n-1}\subset E_n=\mathbb R^n
$$
in $\mathbb R^n$, such that $\dim E_i=i$. The set of all flags constitutes a smooth compact manifold $Fl_n(\mathbb R)$ of dimension $\frac{n(n-1)}{2}$ (one can introduce coordinate systems on it from the evident identification $Fl_n(\mathbb R)\cong SL_n(\mathbb R)/B^+_n\cong SO_n(\mathbb R)/(SO_n(\mathbb R)\bigcap B^+_n)$).

For each permutation $w$, one defines the corresponding Schubert cell as the following subset $X_w\in FL_n(\mathbb R)$:
\begin{defi}
\label{defischu1}
Schubert cell $X_{w} \subset Fl_{n}(\mathbb R)$ is the set of the flags verifying certain conditions:
\beq{Xw}
X_{w} = \{E_{\cdot} \in Fl_{n}\ | \ \mathrm{dim}(E_{p} \cap F_{q}) = \# \{ i \mid 1\le i \leq p,\ w(i) \leq q\}\ \forall\ 1 \leq p,q \leq n \},
\eq
where flag $E_{\cdot}= E_{1} \subset E_{2} \subset ... \subset E_{n}=\mathbb{R}^{n}$ and $\mathrm{dim}(E_{i})=i$, $F_{q}= \langle e_{1},...,e_{q} \rangle$ spanned by the first $q$ elements of the basis $e_{1},...,e_{n}$ for $\mathbb{R}^{n}$. We shall denote by $r_w(p,q)$ the number $\# \{ i \mid 1\le i \leq p,\ w(i) \leq q\}$.
\end{defi}
Schubert variety $\bar X_w$ is defined as the closure of the corresponding Schubert cell, which is equivalent to replacing all the equalities for dimensions by inequalities $\ge$. Then one can show that \textit{$w<w'$ in Bruhat order, iff $\bar X_w\subseteq\bar X_{w'}$}, see \cite{F} (in the appendix we write down the equations that determine the preimages in $SO_3(\mathbb R)$ of the Schubert varieties in $Fl_3(\mathbb R)$).

In addition to the Schubert cells $X_w$ one can define \textit{dual Schubert cells} $\Omega_w$, as the sets of those flags $E_\cdot$ for which for all $p,\,q$ we have
$$
\mathrm{dim}(E_p\bigcap\tilde F_q)=\#\{i\le p\mid w(i)\ge n+1-q\},
$$
where $\tilde F_q$ is the subspace, spanned by the last $q$ vectors of the basis. All the statements about the Schubert (or Bruhat) cells and varieties can be transferred to the dual cells and varieties with minimal modifications, see chapter 10 of Fulton's book \cite{F} for details. Below we shall mention those modifications, that we shall need.

There is a more general definition of Bruhat cells (and dual cells) and Bruhat order which works for an arbitrary generalized flag space, i.e. the factor-space of a semi-simple group $G$ by its Borel subgroup $B$. Namely, one can choose an element $w\in G$, representing some element of the Weyl group of $G$, and consider the coset $wB\in G/B$. Then Bruhat cell $X_w$ is equal to the orbit of $wB$ under the left action of $B$ in $G/B$. One can show, that this construction defines a cell decomposition of the flag space, in particular one can show that these orbits are homeomorphic to open discs. The corresponding Bruhat order on Weyl group of $G$ is then given by $v<w$ iff $\bar X_v\subseteq\bar X_w$.

In the case of the usual "geometric"\ flags, i.e. $G=SL_n(\mathbb R)$, the homeomorphism of $X_{w}$ and open disc of dimension $l(w)$ (even $\mathbb{R}^{l(w)}$) can be described in a rather straightforward way (see \S10.2 of \cite{F}): each flag $E_{\cdot}$ has $E_{p}$ spanned by the first $p$ row vectors of a unique "row echelon"\ matrix, the $p^{th}$ row of which has $1$ in the $w(p)^{th}$ column with all 0's after and below these 1's. For instance if $w=(2143)$ such matrices have the form
$$
\tilde\alpha=\begin{pmatrix}* & 1 & 0 & 0\\
                            1 & 0 & 0 & 0\\
                            0 & 0 & * & 1\\
                            0 & 0 & 1 & 0\end{pmatrix},
$$
where $*$ denotes arbitrary real numbers. The corresponding Schubert variety however need not look like a submanifold, but one can describe this variety as the common zero set of a finite collection of polynomial equations on matrices in $SL_n(\mathbb R)$ (or in $SO_n(\mathbb R)$), representing the flags.

For example, consider a matrix $A\in SL_n(\mathbb R)$. As we mentioned before, this matrix determines a flag $E_\cdot$, so that $E_p$ is spanned by the first $p$ row-vectors of $A$ (in some books column-vectors of $A$ are considered). Let $F_\cdot$ be the standard flag, i.e. the flag spanned by rows of the unit matrix. In this case the dimension of the intersection $E_p\bigcap F_q$ is equal to the difference $p-\mathrm{rk}\,\tilde A_{p,q}$, where $\tilde A_{p,q}$ is the submatrix spanned by the first $p$ rows and the last $n-q$ columns of $A$. On the other hand $p-r_w(p,q)$ is equal to the rank of the similar submatrix of the permutation matrix $W\in O_n(\mathbb R)$, associated with $w$. Thus we obtain the following alternative definition of the Schubert cell:
\begin{defi}
\label{defischu2}
Schubert variety (closure of the Schubert cell) associated with a permutation $w\in S_n$ is equal to the subset of flags in $Fl^+_{n}(\mathbb R)=SL_n(\mathbb R)/B_n^+$, spanned by all the matrices $A\in SL_{n}(\mathbb R)$ such that all their upper-right rectangular submatrices have ranks less or equal to the rank of the corresponding submatrix of $W\in O_{n}(\mathbb R)$, the orthogonal matrix associated with the permutation $w$. For dual Schuber cell one should compare the ranks of the upper-left submatrices.
\end{defi}
More accurately the equivalence of these two definitions (\ref{defischu1} and \ref{defischu2}) is proved in \cite{F, UW}.

For instance, take permutation $w=\begin{pmatrix}1 & 2 & 3 & 4\\ 2 & 1 & 4 & 3\end{pmatrix}$, then
$$
W=\begin{pmatrix}0 & 1 & 0 & 0\\
                            1 & 0 & 0 & 0\\
                            0 & 0 & 0 & 1\\
                            0 & 0 & 1 & 0\end{pmatrix},
$$
and we can consider the matrix of ranks, i.e. the matrix, whose $i,j$\/th entry is equal to the rank of the rectangular submatrix $W_{i,j}$ spanned by the first $i$ columns and $j$ rows of $W$:
$$
\mathrm{rk}(W)=\begin{pmatrix}0 & 1 & 1 & 1\\
                                1 & 2 & 2 & 2\\
                                1 & 2 & 2 & 3\\
                                1 & 2 & 3 & 4\end{pmatrix}.
$$
Now the condition that a matrix $A$ defines a flag from the dual Schubert variety $\bar\Omega_w$ is obtained by comparing the ranks of the submatrices in $A$ and the entries of this matrix, as we described above. Thus for this to be true, we should require that the ranks of certain submatrices in $A$ should be less than what is prescribed by their sizes, which is equivalent to vanishing of all the minors, associated with these rectangular submatrices. For instance in the example we consider here, we have the condition $a_{11}=0,\ \det(a_{pq})_{1\le p,q\le3}=0$.

In general, we obtain the polynomial conditions, which would say that all minors of prescribed sizes in a matrix $A$ vanish. Thus the Schubert variety is indeed a variety (on the level of matrices) and we have a complete set of equations, describing it (although the system we obtain in this manner is excessive).

We conclude this section with two general facts about the Schubert cells and dual Schubert cells:
\begin{enumerate}
\item Schubert cell $X_w$ and dual Schubert cell $\Omega_u$ intersect, if and only if $u<w$ in Bruhat order, in which case they intersect transversally;
\item Tangent space to $X_w$ at $w$ is spanned by those positive roots of $SL_n(\mathbb R)$, which are mapped into negative roots by $w$; tangent space to the dual Schubert cell $\Omega_w$ at $w$ is spanned by those positive roots of $SL_n$, which are mapped into positive roots by $w$ (here we consider $w$ as an element of the Weyl group of $SL_n(\mathbb R)$ and identify the tangent space to the flag manifold $SL_n(\mathbb R)/B^+_n$ with the image of the tangent space to $SL_n(\mathbb R)$ under the natural projection).
\end{enumerate}
Both these statements are well-known (see for example \cite{F}, \cite{SBVL}).

\section{The phase portrait of the Toda flow}
In this section we shall give a description of the asymptotic behaviour of the phase trajectories of Toda system for low dimensions. The methods we use are rather simple-minded: knowing that the system is a gradient one, in fact even a Morse system, we see that every its trajectory should flow from one singular point of the vector field to another (in particular, there are no periodic trajectories). Besides this, for each singular point of the vector field we can describe its local space of incoming and outgoing trajectories. Then we use a suitable set of invariant hypersurfaces to describe the combinatorics of this transfer.

\subsection{The case of $SO_3(\mathbb R)$}
\label{sectso33}
We begin with the accurate description of the phase structure of the full symmetric Toda flow for $n=3$. First of all we shall fix a matrix of eigenvalues
$$
\Lambda=\begin{pmatrix}\lambda_1 &0 & 0\\ 0 &\lambda_2 &0\\ 0 & 0& \lambda_3\end{pmatrix},\ \lambda_1+\lambda_2+\lambda_3=0.
$$
We shall assume that $\lambda_i$ are all distinct, more precisely that
$$
\lambda_{2} > \lambda_{1} >0> \lambda_{3}.
$$
Choosing this order is equivalent to the choice of the positive Weyl chamber

As we know, the evolution of the system can be now described in the terms of evolution of the corresponding system on $SO_3(\mathbb R)$, see \eqref{eqso3}. It is easy to see, that the vector field $A(\Psi)$ (see \eqref{fieldso3}) vanishes, if and only if the matrix $\Psi\Lambda\Psi^{-1}$ is diagonal. Since the eigenvalues of $\Lambda$ are all distinct and coincide with eigenvalues of $\Psi\Lambda\Psi^{-1}$, and the matrix $\Psi$ is orthogonal, we conclude that $\Psi\Lambda\Psi^{-1}$ can be diagonal, if and only if $\Psi$ is a matrix with positive determinant that permutes the vectors $\pm e_1,\,\pm e_2,\,\pm e_3$ where $e_1,\,e_2,\,e_3$ is the standard base of $\mathbb R^3$. Such matrices form a finite subgroup $\widetilde S_3$ of $SO_3(\mathbb R)$. Here we give a list of such matrices in $SO_3(\mathbb R)$, symbols $\tilde s_i$ denote the cosets of the elements of $\widetilde S_3$ modulo the group $N=\tilde s_1$ of diagonal matrices in $SO_3(\mathbb R)$, see below:\\
\begin{align*}
\tilde{s_{1}} :
{}&\begin{pmatrix}
                1 & 0 & 0 \\
                0 & 1 & 0 \\
                0 & 0 & 1 \end{pmatrix}, &
{}&\begin{pmatrix}
                -1 & 0 & 0 \\
                0 & -1 & 0 \\
                0 & 0 & 1 \end{pmatrix}, &
{}&\begin{pmatrix} -1 & 0 & 0\\
                0 & 1 & 0 \\
                0 & 0 & -1 \end{pmatrix}, &
{}&\begin{pmatrix} 1 & 0 & 0\\
                0 & -1 & 0 \\
                0 & 0 & -1 \end{pmatrix}, \\
                \tilde{s_{2}} :
{}&\begin{pmatrix} 0 & 1 & 0\\
                -1 & 0 & 0 \\
                0 & 0 & 1 \end{pmatrix}, &
{}&\begin{pmatrix} 0 & -1 & 0\\
                1 & 0 & 0 \\
                0 & 0 & 1 \end{pmatrix}, &
{}&\begin{pmatrix} 0 & 1 & 0\\
                1 & 0 & 0 \\
                0 & 0 & -1 \end{pmatrix}, &
{}&\begin{pmatrix} 0 & -1 & 0\\
                -1 & 0 & 0 \\
                0 & 0 & -1 \end{pmatrix},\ \\
                \tilde{s_{3}} :
{}&\begin{pmatrix} 0 & 0 & 1\\
                0 & 1 & 0 \\
                -1 & 0 & 0 \end{pmatrix}, &
{}&\begin{pmatrix} 0 & 0 & 1\\
                0 & -1 & 0 \\
                1 & 0 & 0 \end{pmatrix}, &
{}&\begin{pmatrix} 0 & 0 & -1\\
                0 & 1 & 0 \\
                1 & 0 & 0 \end{pmatrix}, &
{}&\begin{pmatrix} 0 & 0 & -1\\
                0 & -1 & 0 \\
                -1 & 0 & 0 \end{pmatrix},
\end{align*}
\beq{classesS3}
\eq
\begin{align*}
                \tilde{s_{4}} :
{}&\begin{pmatrix} -1 & 0 & 0\\
                0 & 0 & 1 \\
                0 & 1 & 0 \end{pmatrix}, &
{}&\begin{pmatrix} 1 & 0 & 0\\
                0 & 0 & 1 \\
                0 & -1 & 0 \end{pmatrix}, &
{}&\begin{pmatrix} 1 & 0 & 0\\
                0 & 0 & -1 \\
                0 & 1 & 0 \end{pmatrix}, &
{}&\begin{pmatrix} -1 & 0 & 0\\
                0 & 0 & -1 \\
                0 & -1 & 0 \end{pmatrix}, \\
                \tilde{s_{5}} :
{}&\begin{pmatrix} 0 & 1 & 0\\
                0 & 0 & 1 \\
                1 & 0 & 0 \end{pmatrix}, &
{}&\begin{pmatrix} 0 & -1 & 0\\
                0 & 0 & 1 \\
                -1 & 0 & 0 \end{pmatrix}, &
{}&\begin{pmatrix} 0 & -1 & 0\\
                0 & 0 & -1 \\
                1 & 0 & 0 \end{pmatrix}, &
{}&\begin{pmatrix} 0 & 1 & 0\\
                0 & 0 & -1 \\
                -1 & 0 & 0 \end{pmatrix},\\
                \tilde{s_{6}} :
{}&\begin{pmatrix} 0 & 0 & 1\\
                1 & 0 & 0 \\
                0 & 1 & 0 \end{pmatrix}, &
{}&\begin{pmatrix} 0 & 0 & 1\\
                -1 & 0 & 0 \\
                0 & -1 & 0 \end{pmatrix}, &
{}&\begin{pmatrix} 0 & 0 & -1\\
                1 & 0 & 0 \\
                0 & -1 & 0 \end{pmatrix}, &
{}&\begin{pmatrix} 0 & 0 & -1\\
                -1 & 0 & 0 \\
                0 & 1 & 0 \end{pmatrix}.
\end{align*}
As one sees, $N$ is equal to the intersection of $SO_3(\mathbb R)$ with the group of upper triangular matrices $B^+_n$; this means that the factor-space $SO_3(\mathbb R)/N$ is homeomorphic to the flag space $Fl_3(\mathbb R)$.

We are about to describe the asymptotic behavior of trajectories of the vector field $A(\Psi)$ on $SO_3(\mathbb R)$ (or on $Fl_3(\mathbb R)$), i.e. describe the trajectories that go from one singular point of this field to another. To this end we shall use two important facts about this vector field. First, we recall that this field is the gradient of function $F_3(\Psi)$, see section \ref{Toda_grad} (we can regard this function as a function on $Fl_3(\mathbb R)$, since its value does not depend on the action of $N$). Thus we can assume all the standard properties of the gradient vector fields, in particular, we at once see that there are no closed trajectories of this field since the value of the function $F_3$ always increases along a trajectory. Below we shall show that all the singular points of the field are non-degenerate. Hence we can use the results of Morse theory as well.

Second, one of the important properties of the Toda flow is the fact, that there is a vast collection of invariant surfaces of this vector field, i.e. of the surfaces tangent to the field $A(\Psi)$. These surfaces are described in section \ref{Toda_setting}. In the case $n=3$ there are $6$ distinct surfaces, corresponding to $6$ entries of the first and the third row of $\Psi$ which can vanish. We shall denote these surfaces just by $\psi_{ij}$. Observe, that these equations are also invariant under the action of the group $N$.

So let us begin with the Morse theory.
\begin{prop}
All the singular points of the system are non-degenerate, their index is equal to the height of the corresponding element in the Weyl group (with respect to the chosen positive Weyl chamber).
\end{prop}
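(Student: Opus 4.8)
The plan is to compute the Hessian of the potential function $F_n$ (we are in the case $n=3$ here, but the computation is uniform in $n$) directly at each singular point and read off its signature. Recall from section \ref{Toda_grad} that $F_n(\Psi)=\mathrm{Tr}(\Psi\Lambda\Psi^{-1}N)$, where $N$ is a fixed diagonal matrix with distinct, increasing entries, that this function descends to $Fl_n(\mathbb R)=SO_n(\mathbb R)/N$, and that the singular points are exactly the matrices of $\widetilde S_n\subset SO_n(\mathbb R)$ permuting the vectors $\pm e_i$; since $N$ is finite it is harmless to fix one representative $w$, a genuine permutation matrix, in each coset. Near such a $w$ I would use the exponential chart $\Psi=e^{\Theta}w$, $\Theta\in\mathfrak{so}_n$, whose above-diagonal entries $\theta_{ij}$ ($i<j$) serve as linear coordinates; since $\dim\mathfrak{so}_n=\frac{n(n-1)}{2}=\dim Fl_n(\mathbb R)$ and the $N$-fibre is discrete, these are honest local coordinates on $SO_n(\mathbb R)$ and, equivalently, on the flag manifold near the image of $w$. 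In this chart $\Psi\Lambda\Psi^{-1}=e^{\Theta}\Lambda_w e^{-\Theta}$, where $\Lambda_w=w\Lambda w^{-1}=\mathrm{diag}(\mu_1,\dots,\mu_n)$ is again diagonal, a permutation of the entries of $\Lambda$.

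First I would expand $F_n$ to second order, using $e^{\Theta}\Lambda_w e^{-\Theta}=\Lambda_w+[\Theta,\Lambda_w]+\tfrac12[\Theta,[\Theta,\Lambda_w]]+o(\Theta^2)$:
\[
F_n \;=\; \mathrm{Tr}(\Lambda_w N)+\mathrm{Tr}\big([\Theta,\Lambda_w]N\big)+\tfrac12\,\mathrm{Tr}\big([\Theta,[\Theta,\Lambda_w]]N\big)+o(\Theta^2).
\]
The linear term equals $\mathrm{Tr}\big(\Theta[\Lambda_w,N]\big)$ and vanishes because $\Lambda_w$ and $N$ are both diagonal, so every such $w$ is indeed a critical point. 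Writing $N=\mathrm{diag}(\nu_1,\dots,\nu_n)$, a short computation with the matrix entries shows that the quadratic term is already diagonal in the coordinates $\theta_{ij}$:
\[
\mathrm{Hess}_w F_n(\Theta) \;=\; -\sum_{i<j}(\mu_i-\mu_j)(\nu_i-\nu_j)\,\theta_{ij}^2.
\]

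From this both assertions are immediate. Non-degeneracy: the coefficient of each $\theta_{ij}^2$ is non-zero, because the $\lambda_k$ are pairwise distinct (hence so are the $\mu_k$) and $N$ has pairwise distinct entries. Index: it equals the number of pairs $i<j$ with $(\mu_i-\mu_j)(\nu_i-\nu_j)>0$; since $\nu_i-\nu_j<0$ for $i<j$, this is the number of pairs $i<j$ with $\mu_i<\mu_j$, that is, the number of inversions of the permutation $w$ measured against the fixed ordering of the eigenvalues. Unwinding the dictionary between the chosen order $\lambda_2>\lambda_1>0>\lambda_3$ (equivalently, the chosen positive Weyl chamber) and the standard labelling of simple roots of $SL_n(\mathbb R)$, this count is precisely the height $l(w)$ of $w$ in the Weyl group, which is the claim. (For the local picture of trajectories near $w$ one additionally invokes the Morse--Palais lemma to trivialize the metric at the point, but the index itself is intrinsic to $F_n$ and is read off the Hessian above; together with Proposition \ref{Morthtr} this also fixes the dimension of the incoming bunch of trajectories.)

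The computations above are elementary; the one place that needs care --- and the main obstacle --- is the bookkeeping in the last step: matching the combinatorial count of ``inversions against the eigenvalue order'' with the height function in the chosen positive Weyl chamber, keeping straight whether one lands on $l(w)$ or on the co-length $\frac{n(n-1)}{2}-l(w)$ (equivalently, whether the relevant cell is the Schubert cell $X_w$ or its dual $\Omega_w$), and making this consistent with the sign conventions fixed for the gradient field $A(\Psi)$ and for the Bruhat order in section \ref{sectorder}.
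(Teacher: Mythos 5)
Your computation is correct and is essentially the paper's own approach: the same chart $\Psi=e^{\Theta}\tilde s_i$ and the same second-order expansion of $\mathrm{Tr}(\Psi\Lambda\Psi^{-1}N)$, which the paper carries out by tabulating all six points for $n=3$ and later, in the proof of Proposition \ref{propdeMPe}, in exactly your closed form (your $-\sum_{i<j}(\mu_i-\mu_j)(\nu_i-\nu_j)\theta_{ij}^2$ agrees with \eqref{hesscalc}, and reproduces the coefficients in the paper's table). The only difference is that you argue uniformly in $n$ rather than by inspection of the six cases, and your closing caveat about matching the inversion count to the height convention is precisely the bookkeeping the paper absorbs into the phrase ``with respect to the chosen positive Weyl chamber.''
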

\begin{proof}
This statement is proved in a general case in the paper \cite{deMariPedroni}. Here, the simplest way to show this is by a direct inspection of all $6$ points $\tilde{s_i},\ i=1,\dots,6$, listed above (we can choose arbitrary representative in the corresponding coset for our purposes).

\begin{align*}
\tilde{s_{1}}:\begin{pmatrix} \ast & 0 & 0\\
                0 & \ast & 0 \\
                0 & 0 & \ast \end{pmatrix}, \ \ \
\tilde{s_{2}}:\begin{pmatrix} 0 & \ast & 0\\
                \ast & 0 & 0 \\
                0 & 0 & \ast \end{pmatrix}, \ \ \
\tilde{s_{3}}:\begin{pmatrix} 0 & 0 & \ast\\
                0 & \ast & 0 \\
                \ast & 0 & 0 \end{pmatrix}, \\
\tilde{s_{4}}:\begin{pmatrix} \ast & 0 & 0\\
                0 & 0 & \ast \\
                0 & \ast & 0 \end{pmatrix}, \ \ \
\tilde{s_{5}}:\begin{pmatrix} 0 & \ast & 0\\
                0 & 0 & \ast \\
                \ast & 0 & 0 \end{pmatrix}, \ \ \
\tilde{s_{6}}:\begin{pmatrix} 0 & 0 & \ast\\
                \ast & 0 & 0 \\
                0 & \ast & 0 \end{pmatrix},
\end{align*}
where $\ast$ takes the values $\pm1$.\\

It is convenient to choose in a neighbourhood of a point $\Psi\in SO_3(\mathbb R)$ the coordinate system, induced from the Lie algebra $\mathfrak{so}_3$ in the neighbourhood of the unit and transferred to $\Psi$ by the right translation $R_\Psi$. Let $\theta_1,\,\theta_2,\,\theta_3$ be the coordinates in $\mathfrak{so}_3$, i.e. let
$$
\Theta=\begin{pmatrix}
0 & \theta_{1} & \theta_{3}\\
 -\theta_{1} & 0 & \theta_{2}\\
 -\theta_{3} & -\theta_{2} & 0
\end{pmatrix}
$$
be a generic element of $\mathfrak{so}_3$. In order to express the quadratic part of $F_3(\Psi)$ near the singular points $\tilde{s_i},\ i=1,\dots,6$ in coordinates given by translations of $(\theta_1,\,\theta_2,\,\theta_3)$, we replace the matrix $\Psi$ in a neighborhood of $\tilde{s_i}$ by
\beq{decomppsi}
\Psi=\tilde{s_i}+\Theta\tilde{s_i}+\frac12\Theta^2\tilde{s_i}+o(\Theta^2),
\eq
where by $\tilde{s_i}$ we denote some matrix of the class $\tilde{s_i}$ and $o(\Theta^2)$ stands for the sum of terms of degree 3 and higher in $\theta_i$. Thus, we have
\beq{decomplax}
L(\Psi)=\Psi\Lambda\Psi^T=\tilde{s_i}\Lambda\tilde{s_i}^{-1}+ [\Theta, \tilde{s_i}\Lambda\tilde{s_i}^{-1}] - \Theta \tilde{s_i}\Lambda\tilde{s_i}^{-1} \Theta + \frac{1}{2}[\Theta^{2}, \tilde{s_i}\Lambda\tilde{s_i}^{-1}]_{+}+o(\theta^2).
\eq
Thus we can express the quadratic approximation of $F_3$ near $\tilde{s_i}$ in coordinates $(\theta_1,\,\theta_2,\,\theta_3)$ by straightforward calculations. We have gathered the corresponding results in the following table:
\begin{equation*}
\footnotesize{
\begin{tabular}{|c|c|c|c|c|}
\hline\cline{1-0}
$$ & $$ & $$ & $$ & $$\\
$i$ & $F_3(\tilde{s_i})$ & $d^2_{\tilde{s_i}}F_3$ & Index & Minors\\

$$ & $$ & $$ & $$ & $$\\
\hline\cline{1-0}
$$ & $$ & $$ & $$ & $$\\
$1$ & $\lambda_{3}-\lambda_{1}$ & $(\lambda_{1} - \lambda_{2})\theta^{2}_{1} + (\lambda_{2} - \lambda_{3})\theta^{2}_{2} + 2(\lambda_{1}-\lambda_{3})\theta^{2}_{3}$ & $1$ & $\psi_{12},\psi_{13},\psi_{31},\psi_{32}$\\

$$ & $$ & $$ & $$ & $$\\
\hline\cline{1-0}
$$ & $$ & $$ & $$ & $$\\
$2$ & $\lambda_{3}-\lambda_{2}$ & $(\lambda_{2} - \lambda_{1})\theta^{2}_{1} + (\lambda_{1} - \lambda_{3})\theta^{2}_{2} + 2(\lambda_{2}-\lambda_{3})\theta^{2}_{3}$ & $0$ & $\psi_{11},\psi_{13},\psi_{31},\psi_{32}$\\

$$ & $$ & $$ & $$ & $$\\
\hline\cline{1-0}
$$ & $$ & $$ & $$ & $$\\
$3$ & $\lambda_{1}-\lambda_{3}$ & $(\lambda_{3} - \lambda_{2})\theta^{2}_{1} + (\lambda_{2} - \lambda_{1})\theta^{2}_{2} + 2(\lambda_{3}-\lambda_{1})\theta^{2}_{3}$ & $2$ & $\psi_{11},\psi_{12},\psi_{32},\psi_{33}$\\

$$ & $$ & $$ & $$ & $$\\
\hline\cline{1-0}
$$ & $$ & $$ & $$ & $$\\
$4$ & $\lambda_{2}-\lambda_{1}$ & $(\lambda_{1} - \lambda_{3})\theta^{2}_{1} + (\lambda_{3} - \lambda_{2})\theta^{2}_{2} + 2(\lambda_{1}-\lambda_{2})\theta^{2}_{3}$ & $2$ & $\psi_{12},\psi_{13},\psi_{31},\psi_{33}$\\

$$ & $$ & $$ & $$ & $$\\
\hline\cline{1-0}
$$ & $$ & $$ & $$ & $$\\
$5$ & $\lambda_{1}-\lambda_{2}$ & $(\lambda_{2} - \lambda_{3})\theta^{2}_{1} + (\lambda_{3} - \lambda_{1})\theta^{2}_{2} + 2(\lambda_{2}-\lambda_{1})\theta^{2}_{3}$ & $1$ & $\psi_{11},\psi_{13},\psi_{32},\psi_{33}$\\

$$ & $$ & $$ & $$ & $$\\
\hline\cline{1-0}
$$ & $$ & $$ & $$ & $$\\
$6$ & $\lambda_{2}-\lambda_{3}$ & $(\lambda_{3} - \lambda_{2})\theta^{2}_{1} + (\lambda_{1} - \lambda_{2})\theta^{2}_{2} + 2(\lambda_{3}-\lambda_{2})\theta^{2}_{3}$ & $3$ & $\psi_{11},\psi_{12},\psi_{31},\psi_{33}$\\

$$ & $$ & $$ & $$ & $$\\
\hline
\end{tabular}
\
}
\end{equation*}
The second column of this table contains the values of $F_3$ at the given point, the third column consists of quadratic parts of the Taylor series at $\tilde{s_i}$. The nondegenracy of all singular points is now quite evident. Moreover, it follows from the choice of the order of $\lambda_i$, that the values of $F_3$ in $\tilde{s_i}$ are all different
\beq{F-Morse-3}
\begin{array}{c}
F_{3}(\tilde{s_6}) > F_3(\tilde{s_3}) > F_3(\tilde{s_4}) > F_3(\tilde{s_5}) > F_3(\tilde{s_1}) > F_3(\tilde{s_2}).\\
\end{array}
\eq
So the function $F_3$ is Morse, the fourth column of the table contains the Morse index (the number of negative terms in the canonic form of the hessian) of the points.
\end{proof}

In the last column of this table we have given the list of all minor surfaces (see section \ref{Toda_setting}), containing the corresponding singular point. As one knows, every trajectory of a gradient vector field on a compact manifold connects (when the time tends to $\pm\infty$) two different singular points. In addition, all the singular points being nondegenerate, we can use the Morse theory to describe (in terms of the local coordinates) the submanifolds spanned by the incoming and outgoing trajectories (they correspond to the negative and positive eigenspaces of the Hessian matrix). On the other hand, we know, that once in a minor surface a trajectory cannot leave it.

Let now $T_{\tilde{s_k}}\psi_{ij}$ be the tangent space of the minor surface $\psi_{ij}$ at $\tilde{s_k}$ and $T^\pm_{\tilde{s_k}}\psi_{ij}$ its intersection with the positive/negative eigenspace of $d^2_{\tilde{s_k}}F_3$ respectively. We conclude, that the outgoing trajectories that are tangent to $T^+_{\tilde{s_k}}\psi_{ij}$, should connect $\tilde{s_k}$ with another singular point in $\psi_{ij}$, and the trajectories tangent to $T^-_{\tilde{s_k}}\psi_{ij}$ should come to $\tilde{s_k}$ from a point in $\psi_{ij}$. Thus if we can cover the positive and negative eigenspaces inside the tangent space $T_{\tilde{s_k}}SO_3(\mathbb R)$ with $T^\pm_{\tilde{s_i}}\psi_{ij}$, we shall be able to show in what minor surface every incoming/outgoing trajectory of $\tilde{s_k}$ lies.

Besides this we know, that every trajectory moves from lower values of $F_3$ to higher values of this function. Now suppose, that $\tilde{s_l}$ has only one outgoing trajectory inside $\psi_{ij}$ and the only singular point above it in $\psi_{ij}$ is $\tilde{s_k}$. Then we must conclude, that this only trajectory connects $\tilde{s_l}$ and $\tilde{s_k}$ and we don't even need to know, if the surface $\psi_{ij}$ is singular or not. Combining these two simple observations, we can draw the full portrait of the system.

We are going to complete this program. Let us begin with the following table, showing the incidence of singular points and minor surfaces.
\beq{t2}
\footnotesize{
\begin{tabular}{|c|c|c|c|c|c|c|}
\hline\cline{1-0}
$$ & $$ & $$ & $$ & $$ & $$ & $$\\
$i$ & $\psi_{11}$ & $\psi_{12}$ & $\psi_{13}$ & $\psi_{31}$ & $\psi_{32}$ & $\psi_{33}$\\

$$ & $$ & $$ & $$ & $$ & $$ & $$\\
\hline\cline{1-0}
$$ & $$ & $$ & $$ & $$ & $$ & $$\\
$\tilde{s_6}$ & $+$ & $+$ & $$ & $+$ & $$ & $+$\\

$$ & $$ & $$ & $$ & $$ & $$ & $$\\
\hline\cline{1-0}
$$ & $$ & $$ & $$ & $$ & $$ & $$\\
$\tilde{s_3}$ & $+$ & $+$ & $$ & $$ & $+$ & $+$\\

$$ & $$ & $$ & $$ & $$ & $$ & $$\\
\hline\cline{1-0}
$$ & $$ & $$ & $$ & $$ & $$ & $$\\
$\tilde{s_4}$ & $$ & $+$ & $+$ & $+$ & $$ & $+$\\

$$ & $$ & $$ & $$ & $$ & $$ & $$\\
\hline\cline{1-0}
$$ & $$ & $$ & $$ & $$ & $$ & $$\\
$\tilde{s_5}$ & $+$ & $$ & $+$ & $$ & $+$ & $+$\\

$$ & $$ & $$ & $$ & $$ & $$ & $$\\
\hline\cline{1-0}
$$ & $$ & $$ & $$ & $$ & $$ & $$\\
$\tilde{s_1}$ & $$ & $+$ & $+$ & $+$ & $+$ & $$\\

$$ & $$ & $$ & $$ & $$ & $$ & $$\\
\hline\cline{1-0}
$$ & $$ & $$ & $$ & $$ & $$ & $$\\
$\tilde{s_2}$ & $+$ & $$ & $+$ & $+$ & $+$& $$\\

$$ & $$ & $$ & $$ & $$ & $$ & $$\\
\hline
\end{tabular}
}
\eq
The plus sign signifies, that the corresponding point belongs to the surface. We can also describe the restrictions of the quadratic Hessian forms to the tangent space of the minor surfaces. To this end we should consider the linear relation on $(\theta_1,\,\theta_2,\,\theta_3)$ arising from the equation $\psi_{ij}=0$ at a point $\tilde{s_i}$ and restrict $d^2_{\tilde{s_i}}F_3$ to the corresponding subspace. The linear parts have the following form:
\begin{align*}
\Theta \cdot \tilde{s_{1}}:\begin{pmatrix} 0 & \theta_{1}\cdot\ast & \theta_{3}\cdot\ast\\
                -\theta_{1}\cdot\ast & 0 & \theta_{2}\cdot\ast\\
                -\theta_{3}\cdot\ast & \theta_{2}\cdot\ast & 0 \end{pmatrix}, \ \ \
\Theta \cdot \tilde{s_{2}}:\begin{pmatrix} \theta_{1}\cdot\ast & 0 & \theta_{3}\cdot\ast\\
                0 & -\theta_{1}\cdot\ast & \theta_{2}\cdot\ast \\
                -\theta_{2}\cdot\ast & -\theta_{3}\cdot\ast & 0 \end{pmatrix}, \\
\Theta \cdot \tilde{s_{3}}:\begin{pmatrix} \theta_{3}\cdot\ast & \theta_{1}\cdot\ast & 0\\
                \theta_{2}\cdot\ast & 0 & -\theta_{1}\cdot\ast\\
                0 & -\theta_{2}\cdot\ast & -\theta_{3}\cdot\ast \end{pmatrix}, \ \ \
\Theta \cdot \tilde{s_{4}}:\begin{pmatrix} 0 & \theta_{3}\cdot\ast & \theta_{1}\cdot\ast\\
                -\theta_{1}\cdot\ast & \theta_{2}\cdot\ast & 0\\
                -\theta_{3}\cdot\ast & 0 & -\theta_{2}\cdot\ast \end{pmatrix}, \\
\Theta \cdot \tilde{s_{5}}:\begin{pmatrix} \theta_{3}\cdot\ast & 0 & \theta_{1}\cdot\ast\\
                \theta_{2}\cdot\ast & -\theta_{1}\cdot\ast & 0\\
                0 & -\theta_{3}\cdot\ast & -\theta_{2}\cdot\ast \end{pmatrix}, \ \ \
\Theta \cdot \tilde{s_{6}}:\begin{pmatrix} \theta_{1}\cdot\ast & \theta_{3}\cdot\ast & 0\\
                0 & \theta_{2}\cdot\ast & -\theta_{1}\cdot\ast\\
                -\theta_{2}\cdot\ast & 0 & -\theta_{3}\cdot\ast \end{pmatrix},
\end{align*}
The equations of minor surfaces being simply $\psi_{ij}=0$, the corresponding linear relations take a particularly simple form $\theta_k=0$ for certain $k=1,2$ or $3$; thus the restriction of $d^2_{\tilde{s_i}}F_3$ is given simply by removing this coordinate. So we sum up the results in the following table (we replace the actual coefficients at the squares of $\theta_i$ by their signs for the sake of brevity)
\beq{t3}
\footnotesize{
\begin{tabular}{|c|c|c|c|c|c|c|}
\hline\cline{1-0}
$$ & $$ & $$ & $$ & $$ & $$ & $$\\
$\tilde{s_i}$ & $\psi_{11}$ & $\psi_{12}$ & $\psi_{13}$ & $\psi_{31}$ & $\psi_{32}$ & $\psi_{33}$\\

$$ & $$ & $$ & $$ & $$ & $$ & $$\\
\hline\cline{1-0}
$$ & $$ & $$ & $$ & $$ & $$ & $$\\
$\tilde{s_6}$ & $-\theta^{2}_{2} -\theta^{2}_{3}$ & $-\theta^{2}_{1} -\theta^{2}_{2}$ & $-$ & $-\theta^{2}_{1} -\theta^{2}_{3}$ & $-$ & $-\theta^{2}_{1} -\theta^{2}_{2}$\\

$$ & $$ & $$ & $$ & $$ & $$ & $$\\
\hline\cline{1-0}
$$ & $$ & $$ & $$ & $$ & $$ & $$\\
$\tilde{s_3}$ & $-\theta^{2}_{1} + \theta^{2}_{2}$ & $-\theta^{2}_{2} + \theta^{2}_{3}$ & $-$ & $-$ & $-\theta^{2}_{1} -\theta^{2}_{3}$ & $-\theta^{2}_{1} + \theta^{2}_{2}$\\

$$ & $$ & $$ & $$ & $$ & $$ & $$\\
\hline\cline{1-0}
$$ & $$ & $$ & $$ & $$ & $$ & $$\\
$\tilde{s_4}$ & $-$ & $+\theta^{2}_{1} -\theta^{2}_{2}$ & $-\theta^{2}_{2} -\theta^{2}_{3}$ & $+\theta^{2}_{1}-\theta^{2}_{2}$ & $-$ & $+\theta^{2}_{1} -\theta^{2}_{3}$\\

$$ & $$ & $$ & $$ & $$ & $$ & $$\\
\hline\cline{1-0}
$$ & $$ & $$ & $$ & $$ & $$ & $$\\
$\tilde{s_5}$ & $+\theta^{2}_{1} -\theta^{2}_{2}$ & $-$ & $-\theta^{2}_{2} + \theta^{2}_{3}$ & $-$ & $+\theta^{2}_{1} - \theta^{2}_{2}$ & $+\theta^{2}_{1} + \theta^{2}_{3}$\\

$$ & $$ & $$ & $$ & $$ & $$ & $$\\
\hline\cline{1-0}
$$ & $$ & $$ & $$ & $$ & $$ & $$\\
$\tilde{s_1}$ & $-$ & $+\theta^{2}_{2} + \theta^{2}_{3}$ & $-\theta^{2}_{1} + \theta^{2}_{2}$ & $-\theta^{2}_{1} + \theta^{2}_{2}$ & $-\theta^{2}_{1} + \theta^{2}_{3}$ & $-$\\

$$ & $$ & $$ & $$ & $$ & $$ & $$\\
\hline\cline{1-0}
$$ & $$ & $$ & $$ & $$ & $$ & $$\\
$\tilde{s_2}$ & $+\theta^{2}_{2} + \theta^{2}_{3}$ & $-$ & $+\theta^{2}_{1} + \theta^{2}_{2}$ & $+\theta^{2}_{1} +\theta^{2}_{3}$ & $+\theta^{2}_{1} + \theta^{2}_{2}$ & $-$\\

$$ & $$ & $$ & $$ & $$ & $$ & $$\\
\hline
\end{tabular}
\
}
\eq
This table allows not only find the indices of singular points on minor surfaces, but also describe the direction (in terms of the local coordinates) in which the incoming/outgoing trajectories flow. In many cases it is enough to show, that two points are connected by a trajectory. For instance, we know, that there is only one outgoing trajectory from $\tilde{s_4}$, directed along $\theta_1$ (see the table \eqref{t2}). However, there are two potential endpoints of this trajectory, $\tilde{s_3}$ or $\tilde{s_6}$ and we cannot choose one of them without the help of the minor surfaces. But as on the other hand this trajectory goes along the surface $\psi_{31}$, where there is no singular points between $\tilde{s_4}$ and $\tilde{s_6}$, we conclude that this unique outgoing trajectory of $\tilde{s_4}$ will tend to $\tilde{s_6}$ when $t\to+\infty$.

Unfortunately, this table is not enough to describe all the trajectories. In order to make the picture complete we should consider the intersections of minor surfaces. It is easy to describe them here: for instance, the intersection $\psi_{12}\cap\psi_{13}$ (in $SO_3(\mathbb R)$) consists of special orthogonal $3\times 3$ matrices with zeros in the second and third places of the first row. Thus the second and the third columns of this matrix give a orthogonal basis of the subspace $\mathbb R^2\subset \mathbb R^3$, spanned by the second and the third coordinate vectors. The remaining (first) column should be equal to the unit vector, orthogonal to this subspace. So we conclude, that the intersection of these two minor surfaces consists of the matrices
$$
\begin{pmatrix}
1 & 0       & 0       \\
0 & \cos{t} & -\sin{t}\\
0 & \sin{t} &  \cos{t}
\end{pmatrix}\ \mbox{or}\ \begin{pmatrix}
                          -1 & 0        & 0       \\
                           0 & -\cos{t} & \sin{t} \\
                           0 & \sin{t} &  \cos{t}
\end{pmatrix}
$$
for some $t\in[0,\,2\pi]$. Now passing to factor space modulo the action of diagonal matrices, we can fix the element in the upper left corner equal to $1$ and factorize the remaining part by the group $\mathbb Z/2\mathbb Z$. This gives us a topological circle inside the factor space. Similarly if we take the intersection $\psi_{12}\cap\psi_{31}$, we shall obtain a wedge of two circles (in particular, this means that the intersection is no more a submanifold). Indeed, consider an orthogonal matrix which satisfies the conditions $\psi_{12}=0, \ \psi_{31}=0$:
$$
\begin{pmatrix}
\psi_{11} & 0 & \psi_{13}\\
\psi_{21} & \psi_{22} & \psi_{23}\\
0 & \psi_{32} &  \psi_{33}
\end{pmatrix}.\
$$
This matrix defines an orthogonal basis in $\mathbb R^3$: $f_{1}=\{ \psi_{11}, \ \psi_{21}, \ 0 \}, \ f_{2}=\{ 0, \ \psi_{22}, \ \psi_{32} \}, \ f_{3}=\{ \psi_{13}, \ \psi_{23}, \ \psi_{33} \}$. So the scalar product $(f_{1}, \ f_{2})$ should vanish and we see, that there can be two distinct types of such matrices:
$$
\begin{pmatrix}
\psi_{11} & 0 & \psi_{13}\\
0 & \psi_{22} & \psi_{23}\\
0 & \psi_{32} &  \psi_{33}
\end{pmatrix}\  \mbox{or}\ \begin{pmatrix}
\psi_{11} & 0 & \psi_{13}\\
\psi_{21} & 0 & \psi_{23}\\
0 & \psi_{32} &  \psi_{33}
\end{pmatrix}\
$$
And since $(f_{i}, \ f_{i}) =1$ we get:
$$
\begin{pmatrix}
1 & 0       & 0       \\
0 & \cos{t_{1}} & -\sin{t_{1}}\\
0 & \sin{t_{1}} &  \cos{t_{1}}
\end{pmatrix}\ \mbox{or}\ \begin{pmatrix}
                          -1 & 0        & 0       \\
                           0 & -\cos{t_{1}} & \sin{t_{1}} \\
                           0 & \sin{t_{1}} &  \cos{t_{1}}
\end{pmatrix}
$$
in the first case and
$$
\begin{pmatrix}
\cos{t_{2}} & 0 & -\sin{t_{2}}\\
\sin{t_{2}} & 0 & \cos{t_{2}}\\
0 & -1 &  0
\end{pmatrix}\ \mbox{or}\ \begin{pmatrix}
                          -\cos{t_{2}} & 0 & \sin{t_{2}}\\
                           \sin{t_{2}} & 0 & \cos{t_{2}}\\
                                     0 & 1 & 0
\end{pmatrix},
$$
in the second (here $t_{1}, \ t_{2} \in [0,\,2\pi]$). Each of these families determines a 1-dimensional submanifold in $SO_3(\mathbb R)$ which is isomorphic to 1-dimensional group $SO_2(\mathbb R)\cong S^1$ inside $\psi_{12}\bigcap\psi_{31}$. Thus in $Fl_3(\mathbb R)$, where we can fix $\pm1$ to $1$, $\psi_{12}\bigcap\psi_{31}$ is homeomorphic to the wedge of two circles.

It is easy to see that different critical points inside the double intersections might belong to different circles. For instance, in the example we considered above the first circle contains $\tilde{s}_{1}, \mbox{at} \ t_{1}= 0,\pi$ and $\tilde{s}_{4}, \mbox{at} \ t_{1}= \pm\pi/2$, and the second circle contains $\tilde{s}_{4}, \mbox{at} \ t_{2}= 0,2\pi$ and $\tilde{s}_{6}, \mbox{at} \ t_{2}=\pm\pi/2$.

In order to describe the phase transition from one point $\tilde{s}_{i}$ to another point $\tilde{s}_{j}$ inside the intersections of minor surfaces more visually, let us consider the following graph language. First of all recall that the infinitesimal behaviour of the gradient field of a Morse function at a singular point on a $2$-dimensional surface (and hence on the minor surfaces we consider here) falls into one of the following three classes according to the index of the Hessian (see (\ref{t3})):
\begin{itemize}
 \item index is equal to $0$ (both coefficients at the squares of the Morse coordinates are $+1$),
 \item index is equal to $1$ (one coefficient is positive, another coefficient is negative),
 \item index is equal to $2$ (both coefficients are negative).
\end{itemize}
In the case under consideration, Morse coordinates can be chosen proportional to $(\theta_1,\theta_2,\theta_3)$ (up to infinitesimal terms), and hence all the the signs are determined by the signs of the coefficients at $\theta_i^2$. We shall denote the corresponding vector fields symbolically by graphs with one vertex (singular point) and oriented edges, labeled with the coordinates $\theta_i$. The edge will be oriented towards the vertex, if the corresponding coefficient is negative, and from the vertex in the other case. We shall also draw the lines, that leave the vertex, above this vertex and the incoming lines below the vertex. Thus we have the following three basic pictures

\begin{figure}[h]
\begin{center}
\includegraphics[width=350pt,height=200pt]{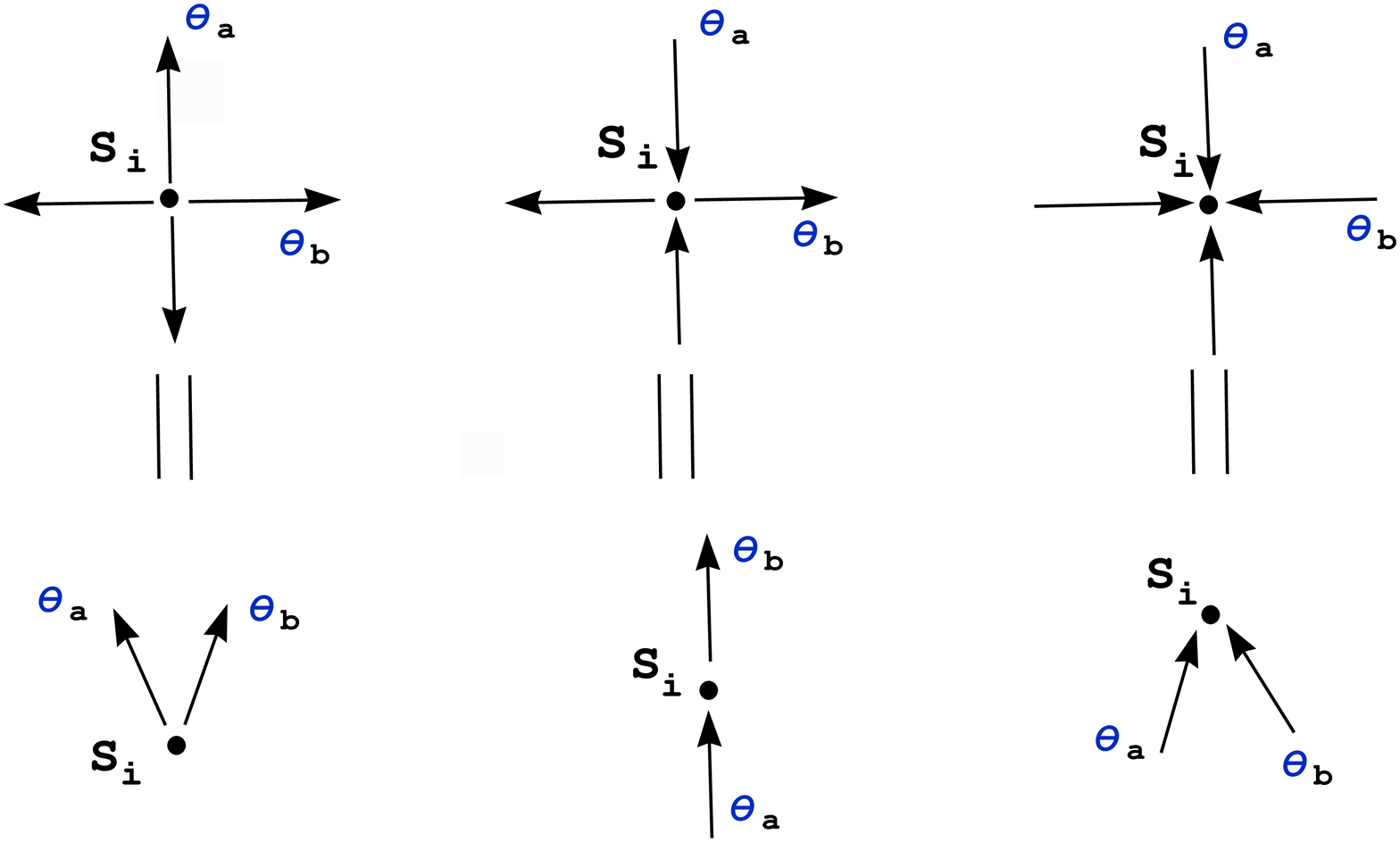}
\end{center}
\end{figure}
Using these building blocks we obtain the following graphical analogue of the table (\ref{t3}):
\begin{figure}[h]
\begin{center}
\includegraphics[width=350pt,height=350pt]{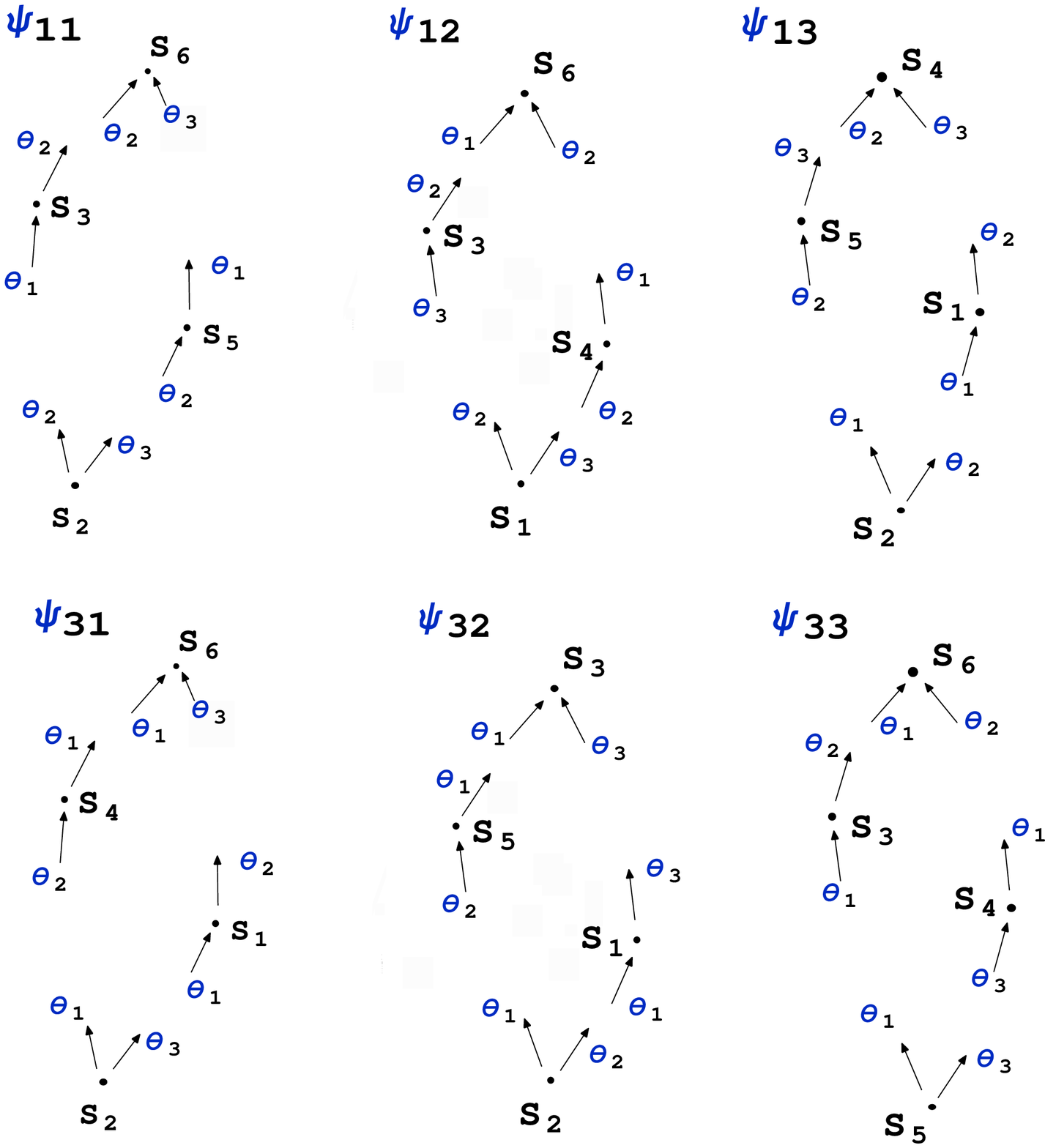}
\end{center}
\end{figure}

\newpage
One can use this graphical language to describe the transitions between the singular points inside the intersections of the minor surfaces. Take for instance $\psi_{12}\cap\psi_{13}$. From the pictures we see that this intersection contains only two points, $\tilde{s}_{4}$ and $\tilde{s}_{1}$. The next question is what trajectories are in the intersection of these two points. It follows from the graphical description of the surface $\psi_{13}=0$, that there is one trajectory in it entering $\tilde{s}_{1}$, and that this trajectory goes along the coordinate line of $\theta_{1}$. Since the value of the Morse function in $\tilde{s}_{1}$ is less than in $\tilde{s}_{4}$, this trajectory does not belong to the intersection. Similarly, the trajectory, that leaves $\tilde{s}_{1}$ along the coordinate $\theta_{3}$ (see the picture for the surface $\psi_{12}=0$) does not belong to the intersection, since it does not belong to the surface $\psi_{13}=0$. Analogously for $\tilde{s}_{4}$ we must delete the incoming trajectory along the coordinate  $\theta_{3}$ and the outcoming trajectory along the coordinate $\theta_{1}$.

The following table summarises these considerations, it describes all possible transitions from one point $\tilde{s}_{i}$ to another point $\tilde{s}_{j}$ which take place in the intersections of various minor surfaces $\psi_{ij} \cap \psi_{kl}$:

\beq{t2-2}
\tiny{
\begin{tabular}{|c|c|c|c|c|c|c|}
\hline\cline{1-0}
$$ & $$ & $$ & $$ & $$ & $$ & $$\\
$\ \ \ \tilde{S} \ \ \ $ & $\ \ \ \psi_{11} \ \ \ $ & $\ \ \ \psi_{12} \ \ \ $ & $\ \ \ \psi_{13} \ \ \ $ & $ \ \ \ \psi_{31} \ \ \ $ & $ \ \ \ \psi_{32} \ \ \ $ & $ \ \ \ \psi_{33} \ \ \ $\\

$$ & $$ & $$ & $$ & $$ & $$ & $$\\
\hline\cline{1-0}
$$ & $$ & $$ & $$ & $$ & $$ & $$\\
$\psi_{11}$ & $\times$ & $\tilde{S}_{3} \rightarrow \tilde{S}_{6}$ & $\tilde{S}_{2} \rightarrow \tilde{S}_{5}$ & $\tilde{S}_{2} \rightarrow \tilde{S}_{6}$ & $\tilde{S}_{2} \rightarrow \tilde{S}_{5}, \ \tilde{S}_{5} \rightarrow \tilde{S}_{3}$ & $\tilde{S}_{5} \rightarrow \tilde{S}_{3}, \ \tilde{S}_{3} \rightarrow \tilde{S}_{6}$\\

$$ & $$ & $$ & $$ & $$ & $$ & $$\\
\hline\cline{1-0}
$$ & $$ & $$ & $$ & $$ & $$ & $$\\
$\psi_{12}$ & $\times$ & $\times$ & $\tilde{S}_{1} \rightarrow \tilde{S}_{4}$ & $\tilde{S}_{1} \rightarrow \tilde{S}_{4}, \ \tilde{S}_{4} \rightarrow \tilde{S}_{6}$ & $\tilde{S}_{1} \rightarrow \tilde{S}_{3}$ & $\tilde{S}_{4} \rightarrow \tilde{S}_{6}, \ \tilde{S}_{3} \rightarrow \tilde{S}_{6}$\\

$$ & $$ & $$ & $$ & $$ & $$ & $$\\
\hline\cline{1-0}
$$ & $$ & $$ & $$ & $$ & $$ & $$\\
$\psi_{13}$ & $\times$ & $\times$ & $\times$ & $\tilde{S}_{2} \rightarrow \tilde{S}_{1}, \ \tilde{S}_{1} \rightarrow \tilde{S}_{4}$ & $\tilde{S}_{2} \rightarrow \tilde{S}_{1}, \ \tilde{S}_{2} \rightarrow \tilde{S}_{5}$ & $\tilde{S}_{5} \rightarrow \tilde{S}_{4}$\\

$$ & $$ & $$ & $$ & $$ & $$ & $$\\
\hline\cline{1-0}
$$ & $$ & $$ & $$ & $$ & $$ & $$\\
$\psi_{31}$ & $\times$ & $\times$ & $\times$ & $\times$ & $\tilde{S}_{2} \rightarrow \tilde{S}_{1}$ & $\tilde{S}_{4} \rightarrow \tilde{S}_{6}$\\

$$ & $$ & $$ & $$ & $$ & $$ & $$\\
\hline\cline{1-0}
$$ & $$ & $$ & $$ & $$ & $$ & $$\\
$\psi_{32}$ & $\times$ & $\times$ & $\times$ & $\times$ & $\times$ & $\tilde{S}_{5} \rightarrow \tilde{S}_{3}$\\

$$ & $$ & $$ & $$ & $$ & $$ & $$\\
\hline\cline{1-0}
$$ & $$ & $$ & $$ & $$ & $$ & $$\\
$\psi_{33}$ & $\times$ & $\times$ & $\times$ & $\times$ & $\times$& $\times$\\

$$ & $$ & $$ & $$ & $$ & $$ & $$\\
\hline
\end{tabular}
}
\eq
Finally observe, that however complicated the geometry of minor surfaces can be, they are $2$\/-dimensional surfaces at their generic points. Thus when we restrict our attention to a surface of this sort, we see that a generic trajectory on it should flow from the lowest to the highest singular point in this surface, and all other trajectories span a nowhere dense set. Similarly, a generic trajectory on $SO_3(\mathbb R)/N$ should flow from the minimum to the maximum value of $F_3$. Summing up, we conclude that the trajectories connect the singular points as it is described at the figure \ref{fig:fig4}. Here thin black arrows denote the "most singular"\ one-dimensional trajectories, "fat"\ blue arrows correspond to two-dimensional flows, "the fattest"\ red arrow is the generic trajectory on $SO_3(\mathbb R)/N=Fl_3(\mathbb R)$, i.e. the trajectories of these sort span locally a three-dimensional subspace.
\begin{figure}[tb]
\centering
		\includegraphics[scale=.55]{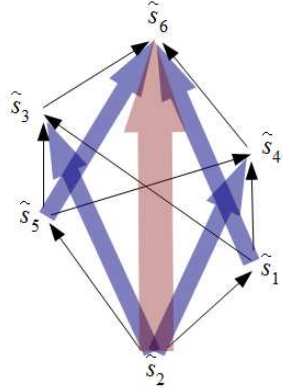}
	\caption{The trajectories in $SO_3(\mathbb R)$}
	\label{fig:fig4}
\end{figure}

As one can readily see, the structure of the arrows at this diagramm reproduces the Hasse diagramm of the group $S_3$, the Weyl group of $SL_3(\mathbb R)$. Moreover, the black arrows correspond to the minimal edges of the Hasse diagram, the blue arrows connect the points separated by one intermediate point in this diagram, and the red arrow connects the points, between which there are two other points.

\subsection{Trajectories on $SO_4(\mathbb R)$}
We now describe the trajectories of Toda system on $SO_4(\mathbb R)$ (or rather on $SO_4(\mathbb R)/SO_4(\mathbb R)\bigcap B^+_n\cong Fl_4(\mathbb R)$). The ideas we use here are the same as before, but the quantity of calculations is much bigger (in particular, there are $24$ singular points, which can be identified with the elements of $S_4$). So, we begin with fixing the order of eigenvalues:
$$
\lambda_{2} > \lambda_{1} > \lambda_{3} > 0>\lambda_{4}.
$$
The Morse function $F_4$ is equal to $F_{4}(a)=Tr(LN)$, where
$$
N=\begin{pmatrix} 0 & 0 & 0& 0\\ 0 & 1 & 0 & 0\\ 0 & 0 & 2 & 0\\ 0 & 0 & 0 & 3\end{pmatrix}.
$$
Once again one can show that the singular points of this function on $SO_4(\mathbb R)$ coincides with the group of $24\cdot 8=192$ matrices with $\mathrm{det}=1$, representing permutations of the vectors $\pm e_1,\,\pm e_2,\,\pm e_3,\,\pm e_4$. This group contains normal subgroup $SO_4(\mathbb R)\bigcap B^+_n$ of diagonal matrices, so that the phase space of our system can be identified with $Fl_4(\mathbb R)$, and the singular points of the induced system consists of $24$ points $\tilde{s}_{i_1i_2i_3i_4}$, where $i_1i_2i_3i_4$ is a permutation of four elements listed below (in $SO_4(\mathbb R),\ \ast$ should be replaced with $\pm1$, so that the determinant be equal to $1$):
\begin{equation*}
\footnotesize{
\begin{array}{c}
\tilde{s}_{1234}=
\left(
\begin{array}{cccc}
 \ast & 0 & 0 & 0 \\
 0 & \ast & 0 & 0 \\
 0 & 0 & \ast & 0 \\
 0 & 0 & 0 & \ast
\end{array}
\right), \ \ \
\tilde{s}_{2341}=
\left(
\begin{array}{cccc}
 0 & \ast & 0 & 0 \\
 0 & 0 & \ast & 0 \\
 0 & 0 & 0 & \ast \\
 \ast & 0 & 0 & 0
\end{array}
\right),\\
\tilde{s}_{3412}=
\left(
\begin{array}{cccc}
 0 & 0 & \ast & 0 \\
 0 & 0 & 0 & \ast \\
 \ast & 0 & 0 & 0 \\
 0 & \ast & 0 & 0
\end{array}
\right), \ \ \
\tilde{s}_{4123}=
\left(
\begin{array}{cccc}
 0 & 0 & 0 & \ast \\
 \ast & 0 & 0 & 0 \\
 0 & \ast & 0 & 0 \\
 0 & 0 & \ast & 0
\end{array}
\right),
\end{array}
}
\end{equation*}
\begin{equation*}
\footnotesize{
\begin{array}{c}
\tilde{s}_{1324}=
\left(
\begin{array}{cccc}
 \ast & 0 & 0 & 0 \\
 0 & 0 & \ast & 0 \\
 0 & \ast & 0 & 0 \\
 0 & 0 & 0 & \ast
\end{array}
\right), \ \ \
\tilde{s}_{3241}=
\left(
\begin{array}{cccc}
 0 & 0 & \ast & 0 \\
 0 & \ast & 0 & 0 \\
 0 & 0 & 0 & \ast \\
 \ast & 0 & 0 & 0
\end{array}
\right), \\
\tilde{s}_{2413}=
\left(
\begin{array}{cccc}
 0 & \ast & 0 & 0 \\
 0 & 0 & 0 & \ast \\
 \ast & 0 & 0 & 0 \\
 0 & 0 & \ast & 0
\end{array}
\right), \ \ \
\tilde{s}_{4132}=
\left(
\begin{array}{cccc}
 0 & 0 & 0 & \ast \\
 \ast & 0 & 0 & 0 \\
 0 & 0 & \ast & 0 \\
 0 & \ast & 0 & 0
\end{array}
\right),
\end{array}
}
\end{equation*}
\begin{equation*}
\footnotesize{
\begin{array}{c}
\tilde{s}_{1342}=
\left(
\begin{array}{cccc}
 \ast & 0 & 0 & 0 \\
 0 & 0 & \ast & 0 \\
 0 & 0 & 0 & \ast \\
 0 & \ast & 0 & 0
\end{array}
\right), \ \ \
\tilde{s}_{3421}=
\left(
\begin{array}{cccc}
 0 & 0 & \ast & 0 \\
 0 & 0 & 0 & \ast \\
 0 & \ast & 0 & 0 \\
 \ast & 0 & 0 & 0
\end{array}
\right), \\
\tilde{s}_{4213}=
\left(
\begin{array}{cccc}
 0 & 0 & 0 & \ast \\
 0 & \ast & 0 & 0 \\
 \ast & 0 & 0 & 0 \\
 0 & 0 & \ast & 0
\end{array}
\right), \ \ \
\tilde{s}_{2134}=
\left(
\begin{array}{cccc}
 0 & \ast & 0 & 0 \\
 \ast & 0 & 0 & 0 \\
 0 & 0 & \ast & 0 \\
 0 & 0 & 0 & \ast
\end{array}
\right),
\end{array}
}
\end{equation*}
\begin{equation*}
\footnotesize{
\begin{array}{c}
\tilde{s}_{1432}=
\left(
\begin{array}{cccc}
 \ast & 0 & 0 & 0 \\
 0 & 0 & 0 & \ast \\
 0 & 0 & \ast & 0 \\
 0 & \ast & 0 & 0
\end{array}
\right), \ \ \
\tilde{s}_{4321}=
\left(
\begin{array}{cccc}
 0 & 0 & 0 & \ast \\
 0 & 0 & \ast & 0 \\
 0 & \ast & 0 & 0 \\
 \ast & 0 & 0 & 0
\end{array}
\right), \\
\tilde{s}_{3214}=
\left(
\begin{array}{cccc}
 0 & 0 & \ast & 0 \\
 0 & \ast & 0 & 0 \\
 \ast & 0 & 0 & 0 \\
 0 & 0 & 0 & \ast
\end{array}
\right), \ \ \
\tilde{s}_{2143}=
\left(
\begin{array}{cccc}
 0 & \ast & 0 & 0 \\
 \ast & 0 & 0 & 0 \\
 0 & 0 & 0 & \ast \\
 0 & 0 & \ast & 0
\end{array}
\right),
\end{array}
}
\end{equation*}
\begin{equation*}
\footnotesize{
\begin{array}{c}
\tilde{s}_{1423}=
\left(
\begin{array}{cccc}
 \ast & 0 & 0 & 0 \\
 0 & 0 & 0 & \ast \\
 0 & \ast & 0 & 0 \\
 0 & 0 & \ast & 0
\end{array}
\right), \ \ \
\tilde{s}_{4231}=
\left(
\begin{array}{cccc}
 0 & 0 & 0 & \ast \\
 0 & \ast & 0 & 0 \\
 0 & 0 & \ast & 0 \\
 \ast & 0 & 0 & 0
\end{array}
\right), \\
\tilde{s}_{2314}=
\left(
\begin{array}{cccc}
 0 & \ast & 0 & 0 \\
 0 & 0 & \ast & 0 \\
 \ast & 0 & 0 & 0 \\
 0 & 0 & 0 & \ast
\end{array}
\right), \ \ \
\tilde{s}_{3142}=
\left(
\begin{array}{cccc}
 0 & 0 & \ast & 0 \\
 \ast & 0 & 0 & 0 \\
 0 & 0 & 0 & \ast \\
 0 & \ast & 0 & 0
\end{array}
\right),
\end{array}
}
\end{equation*}
\begin{equation*}
\footnotesize{
\begin{array}{c}
\tilde{s}_{1243}=
\left(
\begin{array}{cccc}
 \ast & 0 & 0 & 0 \\
 0 & \ast & 0 & 0 \\
 0 & 0 & 0 & \ast \\
 0 & 0 & \ast & 0
\end{array}
\right), \ \ \
\tilde{s}_{2431}=
\left(
\begin{array}{cccc}
 0 & \ast & 0 & 0 \\
 0 & 0 & 0 & \ast \\
 0 & 0 & \ast & 0 \\
 \ast & 0 & 0 & 0
\end{array}
\right), \\
\tilde{s}_{4312}=
\left(
\begin{array}{cccc}
 0 & 0 & 0 & \ast \\
 0 & 0 & \ast & 0 \\
 \ast & 0 & 0 & 0 \\
 0 & \ast & 0 & 0
\end{array}
\right), \ \ \
\tilde{s}_{3124}=
\left(
\begin{array}{cccc}
 0 & 0 & \ast & 0 \\
 \ast & 0 & 0 & 0 \\
 0 & \ast & 0 & 0 \\
 0 & 0 & 0 & \ast
\end{array}
\right).
\end{array}
}
\end{equation*}
As before, we introduce the local coordinates at a point $g\in SO_4(\mathbb R)$, pulling them from the Lie algebra $\mathfrak{so}_4$, whose generic element $\Theta$ is equal to
\beq{tN4}
\Theta = \left(
\begin{array}{cccc}
 0 & \theta _1 & \theta _3 & \theta _4 \\
 -\theta _1 & 0 & \theta _2 & \theta _5 \\
 -\theta _3 & -\theta _2 & 0 & \theta _6 \\
 -\theta _4 & -\theta _5 & -\theta _6 & 0
\end{array}
\right).\\
\eq

\begin{figure}[tb]
\begin{center}
\includegraphics[width=350pt,height=200pt]{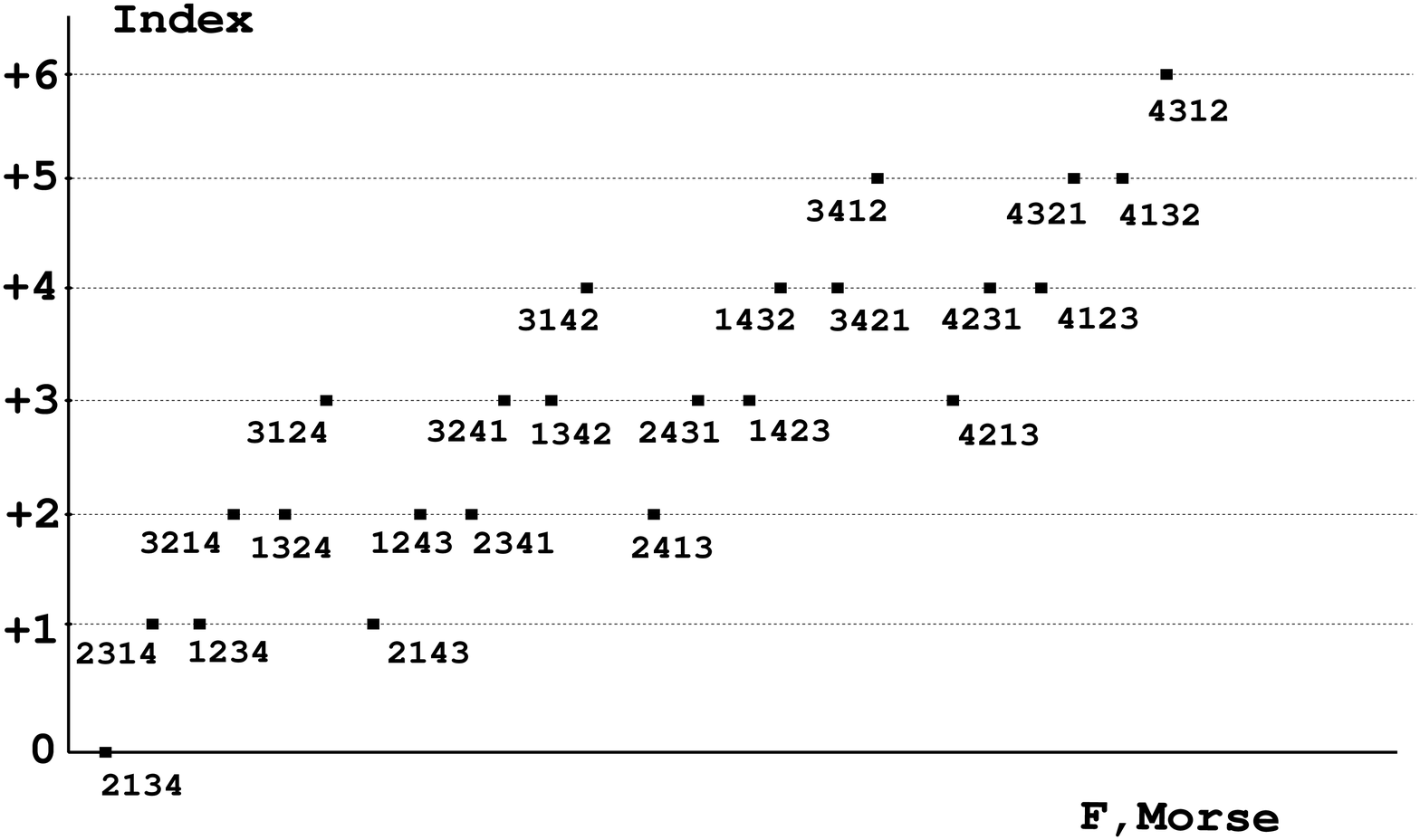}
\caption{Singular points and their indices in $SO_4(\mathbb R)$}
\label{fig:distr}
\end{center}
\end{figure}
We can now express the second order approximation of $F_4$ at a point $\tilde s_{i_1i_2i_3i_4}$ in the terms of coordinates $(\theta_1,\dots,\theta_6)$, for instance:
$$
\begin{aligned}
dF_{4}(\tilde s_{1234})&=\theta _1^2 \left(\lambda _1-\lambda _2\right)+2 \theta _3^2 \left(\lambda _1-\lambda _3\right)+\theta _2^2\\ &\quad+\left(\lambda _2-\lambda _3\right)+3 \theta _4^2 \left(\lambda _1-\lambda _4\right)+2 \theta _5^2
   \left(\lambda _2-\lambda _4\right)+\theta _6^2 \left(\lambda _3-\lambda _4\right).
\end{aligned}
$$

To derive the corresponding approximations for the rest of $\tilde s_{i_1i_2i_3i_4}$, one should permute the eigenvalues $\lambda_i$ by $i_1i_2i_3i_4$. Thus we get the following picture of the singular values of $F_4$ and their indices (see figure \ref{fig:distr}). At this picture we have shown all $24$ singular points, vertical axis shows the indices of these points and the horizontal axis measures the corresponding values of the function $F_4$.

\begin{figure}[tb]
\begin{center}
\includegraphics[width=300pt,height=320pt]{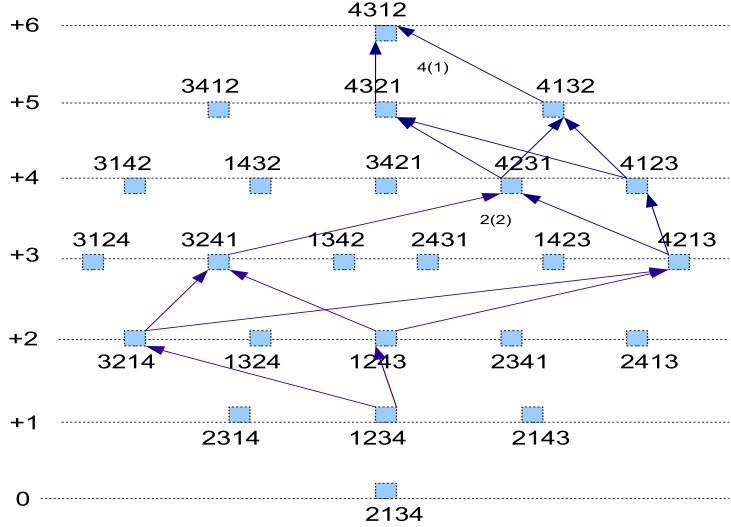}

\vspace{-2cm}
\caption{Examples of $SO_3(\mathbb R)$ clusters inside $SO_4(\mathbb R)$}
\label{fig:clustex}
\end{center}
\end{figure}
As one can see, there are four clusters of six elements each, the positions of points at every cluster resembling very much the points in the phase portrait of $SO_3(\mathbb R)$. These clusters do, actually, correspond to the submanifolds within $SO_4(\mathbb R)$ isomorphic to $SO_3(\mathbb R)$; the lower left cluster here is a genuine subgroup inside  $SO_4(\mathbb R)$, isomorphic to $SO_3(\mathbb R)$: the subgroup, consisting of orthogonal transforms which leave invariant the fourth basis vector. There are four such subgroups (one for every vector $e_i,\ i=1,\dots,4$). We  shall denote these groups by $SO^i_3(\mathbb R),\ i=1,\dots,4$. It is straightforward to see, that these subgroups are invariant with respect to the dynamic system under consideration. For instance, one can prove this by the direct calculation with formula \eqref{Lax-GenToda}. Moreover, this calculation shows, that the dynamical system induced on these submanifolds from the symmetric Toda systtem on $SO_4(\mathbb R)$ coincides with the full symmetric Toda system on $SO_3(\mathbb R)$, if we use the natural identifications (just observe, that the condition $\sum\lambda_i=0$ is not necessary for the definition of the vector field $A$, see equation \eqref{Lax-GenToda}). Thus the restriction of the flow diagram to these subgroups coincides with the diagrams of Bruhat order for $SO_3(\mathbb R)$.

\begin{figure}[tb]
\begin{center}
\includegraphics[width=300pt,height=320pt]{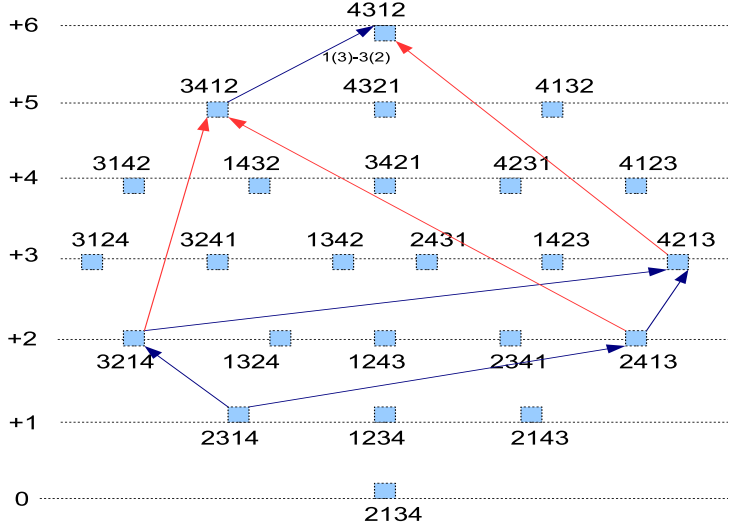}

\vspace{-2cm}
\caption{An exotic $SO_3(\mathbb R)$ cluster inside $SO_4(\mathbb R)$}
\label{fig:clustex2}
\end{center}
\end{figure}
This observation can be further generalized, yielding the full description of the phase flow on $SO_4(\mathbb R)$. Let $s$ be one of the matrices $\tilde s_{i_1i_2i_3i_4}$ listed above. Suppose that the permutation corresponding to it sends $1$ to $i_1$. We can assume, that the ${1i_1}$\/-th matrix element of $s$ is equal to $1$ and not to $-1$. Then consider the coset space
$$
SO^{1i_1}_3=SO^1_3(\mathbb R) \cdot s.
$$
This set consists of all the orthogonal linear operators, which send $e_1$ to $e_{i_1}$. It is evidently diffeomorphic to the group $SO_3(\mathbb R)$, although it does not have a natural group structure.

Let now $\Psi\in SO^{1i_1}_3$ be an arbitrary element. As it follows from the equation \eqref{Lax-GenToda}, the vector field $A=A(\Psi,\Lambda)$ at this point is equal to the right translation by $\tilde s_{i_1i_2i_3i_4}$ of the field
$A(\Psi\tilde s^{-1},\,\tilde s\Lambda\tilde s^{-1})$ on $SO^1_3(\mathbb R)$ which can be identified with the vector field $A$ on $SO_3(\mathbb R)$, provided we take the eigenvalues equal to a permutation of $\lambda_2,\,\lambda_3,\,\lambda_4$. Thus the phase structure of the restriction of the Toda system to this subspace is equal to the translation of the corresponding phase portrait from $SO_3(\mathbb R)$ via right multiplication by $\tilde s_{i_1i_2i_3i_4}$.

The same construction is applicable if we replace $1$ with $2,\,3$ or $4$. Thus we get a collection of $16$ intersecting clusters inside the set of all critical points of the Toda system for $SO_4(\mathbb R)$, such that the restrictions of phase portrait to any of these clusters coincide with the portrait of $SO_3(\mathbb R)$ (after suitable identifications). In table \eqref{t1} below we give a list of these clusters. And figure \ref{fig:clustex} shows the corresponding phase portraits: one, corresponding to $SO_3^{14}$ and the other given by $SO_3^{22}=SO_3^2(\mathbb R)$. Observe however, that the indices of singular points inside a subspace $SO_3^{ij}$ and inside $SO_4(\mathbb R)$ need not coinside. For example, at figure \ref{fig:clustex2} we have drawn the scheme of the cluster $SO_3^{31}$. As one can see, three of the arrows, which correspond to the singular points whose Morse index at the level of $SO_3(\mathbb R)$ differ by $1$, at this picture connect points, whose Morse indices in $SO_4(\mathbb R)$ differ by $2$. Moreover, the same pair of singular points can represent neighbors in one cluster and non-neighbors in another.

Summing this up, we get the following table showing all the $SO_3(\mathbb R)$ subspaces inside $SO_4(\mathbb R)$: here we use the notation introduced above to describe the subspace we deal with; further, in the second column we give the list of singular points, that fall in this subspace; finally, in the last column we list several minors conditions, that single out the subspace.

As one can show by direct inspection, all the eigenspaces of Hessians of the singular points in $SO_4(\mathbb R)$, such that their dimension is less than $4$, are covered by one or another of $SO_3^{ij}$ subspaces listed in table \eqref{t1}. Thus we only need to check the configurations of minor surfaces with codimensions $2$ or less. This is done by a direct inspection. Thus we obtain the following picture of the flows inside $SO_4(\mathbb R)$ (or rather inside the corresponding flag space $Fl_4(\mathbb R)$), see figure \ref{fig:bruhat1}. Here we show only the one-dimensional flows on $SO_4(\mathbb R)$, that is the flows, corresponding to $1$\/-dimensional eigen-subspaces of Hessians at singular points (for references we have added the cyclic decompositions of the corresponding permutations).

\beq{t1}
\tiny{
\begin{tabular}{|c|c|c|}

\hline\cline{1-0}
$$ & $$ & $$\\
\footnotesize{\mbox{Cluster}} & \footnotesize{\mbox{Points}} & \footnotesize{\mbox{Minors}} \\

$$ & $$ & $$\\
\hline\cline{1-0}
$$ & $$ & $$\\
\footnotesize{$SO_3^{41}$} & \footnotesize{$4312, 4132, 4321, 4123, 4231, 4213$} & \footnotesize{$\psi_{11}, \psi_{12}, \psi_{13}, \psi_{44}$}\\

$$ & $$ & $$\\
\hline\cline{1-0}
$$ & $$ & $$\\
\footnotesize{$SO_3^{42}$} & \footnotesize{$3412, 3421, 1432, 1423, 2431, 2413$} & \footnotesize{$\psi_{14}, \psi_{44}, M_{\frac{12}{12}}, M_{\frac{12}{13}}, M_{\frac{12}{23}}$}\\

$$ & $$ & $$\\
\hline\cline{1-0}
$$ & $$ & $$\\
\footnotesize{$SO_3^{43}$} & \footnotesize{$3142, 1342, 3241, 2341, 1243, 2143$} & \footnotesize{$\psi_{14}, \psi_{44}, M_{\frac{12}{14}}, M_{\frac{12}{24}}, M_{\frac{12}{34}}$}\\

$$ & $$ & $$\\
\hline\cline{1-0}
$$ & $$ & $$\\
\footnotesize{$SO_3^{44}$} & \footnotesize{$3124, 1324, 3214, 1234, 2314, 2134$} & \footnotesize{$\psi_{14}, \psi_{41}, \psi_{42}, \psi_{43}$}\\

$$ & $$ & $$\\
\hline\cline{1-0}
$$ & $$ & $$\\
\footnotesize{$SO_3^{21}$} & \footnotesize{$2431, 2413, 2341, 2143, 2314, 2134$} & \footnotesize{$\psi_{11}, \psi_{13}, \psi_{14}, \psi_{42}$}\\

$$ & $$ & $$\\
\hline\cline{1-0}
$$ & $$ & $$\\
\footnotesize{$SO_3^{22}$} & \footnotesize{$4231, 4213, 3241, 1243, 3214, 1234$} & \footnotesize{$\psi_{12}, \psi_{42}, M_{\frac{12}{13}}, M_{\frac{12}{14}}, M_{\frac{12}{34}}$}\\

$$ & $$ & $$\\
\hline\cline{1-0}
$$ & $$ & $$\\
\footnotesize{$SO_3^{23}$} & \footnotesize{$4321, 4123, 3421, 1423, 3124, 1324$} & \footnotesize{$\psi_{12}, \psi_{42}, M_{\frac{12}{12}}, M_{\frac{12}{23}}, M_{\frac{12}{24}}$}\\

$$ & $$ & $$\\
\hline\cline{1-0}
$$ & $$ & $$\\
\footnotesize{$SO_3^{24}$} & \footnotesize{$4312, 4132, 3412, 1432, 3142, 1342$} & \footnotesize{$\psi_{12}, \psi_{41}, \psi_{43}, \psi_{44}$}\\

$$ & $$ & $$\\
\hline\cline{1-0}
$$ & $$ & $$\\
\footnotesize{$SO_3^{31}$} & \footnotesize{$3412, 3421, 3142, 3241, 3124, 3214$} & \footnotesize{$\psi_{11}, \psi_{12}, \psi_{14}, \psi_{43}$}\\

$$ & $$ & $$\\
\hline\cline{1-0}
$$ & $$ & $$\\
\footnotesize{$SO_3^{32}$} & \footnotesize{$4312, 4321, 1342, 2341, 1324, 2314$} & \footnotesize{$\psi_{13}, \psi_{43}, M_{\frac{12}{12}}, M_{\frac{12}{14}}, M_{\frac{12}{24}}$}\\

$$ & $$ & $$\\
\hline\cline{1-0}
$$ & $$ & $$\\
\footnotesize{$SO_3^{33}$} & \footnotesize{$4132, 4231, 1432, 2431, 1234, 2134$} & \footnotesize{$\psi_{13}, \psi_{43}, M_{\frac{12}{13}}, M_{\frac{12}{23}}, M_{\frac{12}{34}}$}\\

$$ & $$ & $$\\
\hline\cline{1-0}
$$ & $$ & $$\\
\footnotesize{$SO_3^{34}$} & \footnotesize{$4123, 4213, 1423, 2413, 1243, 2143$} & \footnotesize{$\psi_{13}, \psi_{41}, \psi_{42}, \psi_{44}$}\\

$$ & $$ & $$\\
\hline\cline{1-0}
$$ & $$ & $$\\
\footnotesize{$SO_3^{11}$} & \footnotesize{$1432, 1423, 1342, 1243, 1324, 1234$} & \footnotesize{$\psi_{12}, \psi_{13}, \psi_{14}, \psi_{41}$}\\

$$ & $$ & $$\\
\hline\cline{1-0}
$$ & $$ & $$\\
\footnotesize{$SO_3^{12}$} & \footnotesize{$4132, 4123, 3142, 2143, 3124, 2134$} & \footnotesize{$\psi_{11}, \psi_{41}, M_{\frac{12}{23}}, M_{\frac{12}{24}}, M_{\frac{12}{34}}$}\\

$$ & $$ & $$\\
\hline\cline{1-0}
$$ & $$ & $$\\
\footnotesize{$SO_3^{13}$} & \footnotesize{$4312, 4213, 3412, 2413, 3214, 2314$} & \footnotesize{$\psi_{11}, \psi_{41}, M_{\frac{12}{12}}, M_{\frac{12}{13}}, M_{\frac{12}{14}}$}\\

$$ & $$ & $$\\
\hline\cline{1-0}
$$ & $$ & $$\\
\footnotesize{$SO_3^{14}$} & \footnotesize{$4321, 4231, 3421, 2431, 3241, 2341$} & \footnotesize{$\psi_{11}, \psi_{42}, \psi_{43}, \psi_{44}$}\\

$$ & $$ & $$\\
\hline
\end{tabular}
}
\eq

\newpage

\begin{figure}[!t]
\begin{center}
\includegraphics[scale=.75]{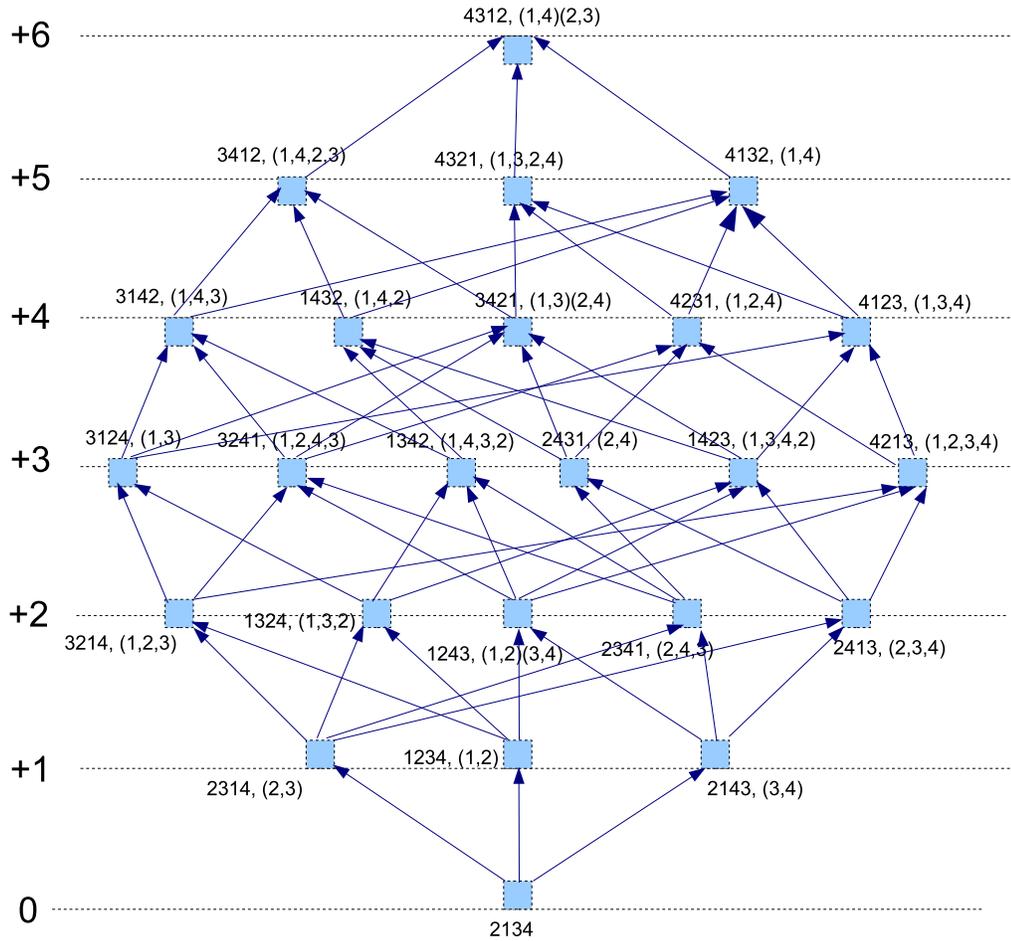}
\vspace{-2cm}
\caption{The $1$\/-dimensional flows at $SO_4(\mathbb R)$}
\label{fig:bruhat1}
\end{center}
\end{figure}

As one can see, it is just the Hasse diagramm of the Bruhat order on the symmetric group $S_4$, or rather its image under the action of the permutation $\begin{pmatrix}1 & 2 & 3 & 4\\ 2 & 1 & 3 & 4\end{pmatrix}$; this is the permutation, which reorders the eigenvalues $\lambda_i$ so that $\lambda_1<\lambda_2<\lambda_3<\lambda_4$. Thus we get for  $SO_4(\mathbb R)$ the following result, just as for $SO_3(\mathbb R)$ before:

\begin{prop}
Two singular points $x$ and $y$ of the full symmetric Toda flow on $SO_4(\mathbb R)$ are connected by a trajectory if and only if $x< y$ in Bruhat order of $S_4$ (after suitable permutation, corresponding to the chosen order of eigenvalues). Moreover, for any $n=2,3,4,5,6$ the  $n$\/-dimensional flows of the Toda system correspond to the length $n$ segments of Bruhat order (recall, that a segment in partial ordered set is a sub poset $[x;y]=\{z\in P|x< z<y\}$; a segment is said to have length $n$, if for any chain of the maximal length $x=x_0<x_1<\dots< x_k=y$ inside it $k=n$).
\end{prop}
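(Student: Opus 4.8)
The plan is to argue exactly as in Section~\ref{sectso33}, carrying out the same programme for the twenty-four singular points on $Fl_4(\mathbb R)=SO_4(\mathbb R)/(SO_4(\mathbb R)\cap B^+_n)$, but with the large amount of bookkeeping organised around the sixteen $SO_3(\mathbb R)$-clusters of table~\eqref{t1}. First I would record the ingredients that make the gradient machinery available: the flow is the gradient flow of the Morse function $F_4$, so every trajectory joins two distinct singular points with $F_4$ strictly increasing along it (in particular there are no closed orbits); the $24$ points $\tilde s_{i_1i_2i_3i_4}$ are non-degenerate, with Hessian obtained from the displayed formula for $d^2F_4(\tilde s_{1234})$ by permuting $\lambda_1,\dots,\lambda_4$ according to $i_1i_2i_3i_4$, and with Morse index equal to the Bruhat length of the corresponding permutation after the reordering permutation $\begin{pmatrix}1&2&3&4\\2&1&3&4\end{pmatrix}$ that sorts the eigenvalues (this is the $SO_4$ case of the first proposition of Section~\ref{sectso33}, cf.~also \cite{deMariPedroni}). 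Consequently it suffices to enumerate all trajectories and to recognise the resulting incidence graph.

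The second step is to exploit the clusters. For each ordered pair $(i,j)$ the coset $SO_3^{ij}=SO_3^i(\mathbb R)\cdot s$ is an invariant submanifold diffeomorphic to $SO_3(\mathbb R)$ on which the induced flow is, after this diffeomorphism and a permutation of a three-element subset of $\{\lambda_1,\dots,\lambda_4\}$, the full symmetric Toda flow on $SO_3(\mathbb R)$; hence by the already-established $n=3$ picture (figure~\ref{fig:fig4}) the six singular points inside a given cluster are joined precisely according to the Bruhat order of the associated $S_3$, and the family of trajectories joining two of them has, inside the cluster, dimension equal to the length of the corresponding $S_3$-segment. The crucial structural point is then to check, by direct inspection of the Hessians $d^2_{\tilde s}F_4$ and of the tangent spaces $T_{\tilde s}SO_3^{ij}$ (which are cut out by the vanishing of the appropriate $\theta_k$, equivalently by the minor conditions listed in table~\eqref{t1}), that every eigen-subspace of $d^2_{\tilde s}F_4$ of dimension $\le3$ is contained in $T_{\tilde s}SO_3^{ij}$ for at least one cluster through $\tilde s$. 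Granting this, a trajectory tangent at one of its endpoints to such a subspace lies entirely inside that cluster, so its endpoints and the dimension of its family are forced by the $n=3$ analysis; assembling the contributions of all sixteen clusters produces figure~\ref{fig:bruhat1}, which one recognises directly as the Hasse diagram of the Bruhat order on $S_4$ conjugated by the reordering permutation. This already yields the covering relations (the $1$-dimensional flows) and all flows spanning submanifolds of dimension $\le3$.

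It remains to handle the eigen-subspaces of dimension $\ge4$, i.e.\ of codimension $\le2$, which are not covered by any cluster. Here I would proceed as in Section~\ref{sectso33}: for each singular point list the finitely many minor surfaces $\psi_{pq}=0$ and $2\times2$-minor surfaces $M=0$ of codimension $\le2$ passing through it, intersect them with the positive and negative eigenspaces of the Hessian, and use that a trajectory once inside a minor surface can never leave it, together with the monotonicity of $F_4$, to pin down the endpoints of the remaining trajectories; since a minor surface is two-dimensional at its generic points inside any cluster and, more generally, an $m$-dimensional invariant stratum carries a generic trajectory running from its lowest to its highest critical point, one works downward in codimension until the unique top-dimensional flow from $\tilde s_{\min}$ to $\tilde s_{\max}$ is reached. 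Carrying this out shows that two singular points on $SO_4(\mathbb R)$ are joined by a trajectory if and only if they are Bruhat comparable (after the reordering), and that the family of trajectories joining $x<y$ has dimension $\ell(y)-\ell(x)$, which is exactly the length of the segment $[x;y]$; specialising to $\ell(y)-\ell(x)=n\in\{2,3,4,5,6\}$ gives the asserted correspondence. (Equivalently, once the inspection confirms that the minor surfaces are the stable and unstable manifolds of the flow and that they meet transversally, the dimension count is just the Morse--Smale dimension formula of Definition~\ref{MoSm}.) I expect the last step to be the real obstacle: unlike the uniform cluster argument, which disposes of all strata of codimension $\ge3$ at once, the codimension $\le2$ strata require a genuinely case-by-case verification, and one must make sure the available minor surfaces and their pairwise intersections are rich enough to separate all competing candidate endpoints. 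A secondary subtlety, already visible in figure~\ref{fig:clustex2}, is that the index of a singular point inside a cluster may differ from its index in $SO_4(\mathbb R)$, so the covering of low-dimensional eigenspaces in the second step has to be checked direction by direction rather than by merely counting dimensions.
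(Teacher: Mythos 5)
Your proposal is correct and follows essentially the same route as the paper: reduce as much as possible to the sixteen invariant $SO_3^{ij}$ clusters (on which the already-established $n=3$ picture applies), verify that every Hessian eigen-subspace of dimension $\le 3$ lies in the tangent space of some cluster, and then settle the remaining codimension-$\le 2$ configurations by direct inspection of the minor surfaces and their intersections, with the monotonicity of $F_4$ pinning down endpoints. The two caveats you flag — the index shift between a cluster and the ambient $SO_4(\mathbb R)$, and the case-by-case nature of the low-codimension check — are exactly the points the paper also acknowledges.
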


\subsection{The general case}
Let us now prove that in the case of arbitrary $n$ the similar conclusions should hold, namely:

\begin{theorem}
\label{theomag} The singular points of full symmetric Toda flow on $SO_n(\mathbb R)$ can be identified with the elements of the symmetric group $S_n$ so that two points are connected by a trajectory if and only if they are comparable in Bruhat order on $S_n$ (after suitable permutation, corresponding to the chosen order of eigenvalues).
\end{theorem}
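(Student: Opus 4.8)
The plan is to prove the theorem by identifying the unstable and the stable manifolds of the Toda flow on the flag space $Fl_n(\mathbb R)\cong SO_n(\mathbb R)/N$ (with $N=SO_n(\mathbb R)\cap B^+_n$) with the dual Schubert cells $\Omega_w$ and the Schubert cells $X_w$ respectively, and then to read off the connectivity of the singular points from the known combinatorics of intersections of Schubert cells (the two facts listed at the end of Section \ref{sectorder}). Throughout we work with the potential $F_n(\Psi)=\mathrm{Tr}(\Psi\Lambda\Psi^{-1}N)$, whose $J$-gradient generates the flow, and with the singular points $\tilde s_w$, $w\in S_n$.

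The first step is the local analysis at a singular point. As in the cases $n=3,4$ (or by \cite{deMariPedroni}), $F_n$ is Morse, its singular points are the permutation matrices $\tilde s_w$, and the Morse index of $\tilde s_w$ equals the length $l(w)$ of $w$ relative to the Weyl chamber fixed by the ordering of the $\lambda_i$. The extra observation we need is that the Hessian of $F_n$ at $\tilde s_w$, written in the root coordinates $\theta_{ij}$ (for $i<j$) on $\mathfrak{so}_n$ translated to $\tilde s_w$, is already diagonal: a short computation of the quadratic part of $F_n((1+\Theta+\tfrac12\Theta^2)\tilde s_w)$ gives $\sum_{i<j}(D_{ii}-D_{jj})(N_{jj}-N_{ii})\,\theta_{ij}^2$, where $D=\tilde s_w\Lambda\tilde s_w^{-1}$ is diagonal. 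Since $N$ is increasing, the sign of the coefficient of $\theta_{ij}^2$ is that of $D_{ii}-D_{jj}$, hence negative precisely when $w$ carries the positive root $e_i-e_j$ to a negative root. Therefore the negative eigenspace of the Hessian at $\tilde s_w$ --- which is $T_{\tilde s_w}W^s_{\tilde s_w}$ --- is the span of the root directions inverted by $w$, i.e.\ $T_{\tilde s_w}X_w$ (second fact of Section \ref{sectorder}); dually $T_{\tilde s_w}W^u_{\tilde s_w}=T_{\tilde s_w}\Omega_w$, and in particular $\dim W^s_{\tilde s_w}=l(w)=\dim X_w$, $\dim W^u_{\tilde s_w}=\tfrac{n(n-1)}2-l(w)=\dim\Omega_w$.

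The second step promotes this to the global identifications $W^s_{\tilde s_w}=X_w$ and $W^u_{\tilde s_w}=\Omega_w$. The crucial ingredient is that the Schubert variety $\bar X_w$ and the dual Schubert variety $\bar\Omega_w$ are invariant under the Toda flow; by Definition \ref{defischu2} they are cut out by vanishing of suitable minors of the representing matrix, and the invariant minor surfaces of Section \ref{secttodagen} (equation \eqref{minors}), together with the invariant $SO_k(\mathbb R)$-type subsystems already exploited for $n=4$, exhibit these loci as flow-invariant. Granting this: for $p\in W^s_{\tilde s_w}$ the forward trajectory eventually enters the local stable manifold at $\tilde s_w$, which by the tangent-space computation and the stable-manifold theorem is an open piece of $X_w$; hence $\phi_t(p)\in\bar X_w$ for large $t$, so $p\in\bar X_w$ by invariance. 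The invariant closed set $\bar X_w\setminus X_w=\bigcup_{w'<w}\bar X_{w'}$ is disjoint from a neighbourhood of $\tilde s_w$ in $W^s_{\tilde s_w}$ and hence from all of $W^s_{\tilde s_w}$, so $W^s_{\tilde s_w}\subseteq X_w$; since $X_w$ is connected of dimension $l(w)=\dim W^s_{\tilde s_w}$ and $W^s_{\tilde s_w}$ is both open and closed in it (closedness again because a trajectory staying in the compact set $\bar X_w$ can only converge to some $\tilde s_{w'}$ with $w'\le w$, and for $w'<w$ it would end up in the disjoint cell $X_{w'}$), the inclusion is an equality. The dual reasoning gives $W^u_{\tilde s_w}=\Omega_w$.

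Finally, a trajectory runs from $\tilde s_x$ to $\tilde s_y$ exactly when $W^u_{\tilde s_x}\cap W^s_{\tilde s_y}\neq\varnothing$, i.e.\ $\Omega_x\cap X_y\neq\varnothing$, which by the first fact of Section \ref{sectorder} holds if and only if $x<y$ in Bruhat order; hence $\tilde s_x$ and $\tilde s_y$ are joined by a trajectory precisely when they are Bruhat-comparable. (Since the same fact asserts transversality, the Morse--Smale dimension formula gives $\dim(\Omega_x\cap X_y)=l(y)-l(x)$, the length of the interval $[x,y]$ and hence the distance between $\tilde s_x,\tilde s_y$ in the Hasse graph --- this is Corollary \ref{corMS}.) The combinatorial last step and the invariance of Schubert varieties are not where the difficulty lies; the genuine work is the local Hessian computation at an arbitrary $\tilde s_w$ --- verifying in general $n$ that the sign of the $\theta_{ij}^2$-coefficient is governed exactly by whether $w$ inverts $e_i-e_j$, with all conventions (choice of Weyl chamber, one-line versus cyclic notation for $w$, left versus right translation of the coordinate frame) consistently aligned --- together with the bookkeeping needed to guarantee that the invariant minor surfaces and $SO_k(\mathbb R)$-subsystems really do detect every lower-dimensional eigenspace of the Hessians, exactly as the tables of Sections \ref{sectso33} and 3.2 do by hand in the low-dimensional cases.
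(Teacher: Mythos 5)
Your proposal is correct and follows essentially the same route as the paper's own proof: the explicit Hessian computation $\sum_{i<j}\theta_{ij}^2(\lambda_{w(i)}-\lambda_{w(j)})(j-i)$ identifying the negative/positive eigenspaces with the root directions of $X_w$ and $\Omega_w$, the flow-invariance of (dual) Schubert varieties via the minor equations, the identification of the incoming/outgoing trajectory manifolds with $X_w$ and $\Omega_w$, and the final appeal to the fact that $X_w\cap\Omega_u\neq\varnothing$ iff $u<w$. The only discrepancy is terminological --- you use the standard convention (stable $=$ incoming $=X_w$) whereas the paper defines ``stable'' as the outgoing manifold, so your labels are reversed relative to the paper's while the substance agrees; your explicit open-and-closed argument for promoting the local identification $T_{\tilde s_w}W^s=T_{\tilde s_w}X_w$ to the global equality $W^s_{\tilde s_w}=X_w$ is a welcome elaboration of a step the paper leaves implicit.
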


Moreover, we shall also prove that there is a $k$\/-parametric family of such trajectories, if one can insert $k$ points in Bruhat order between the given two (see corollary \ref{corMS}).

The remark, concerning the order of the eigenvalues means that if we identify $S_n$ with the group of permutations of the eigenvalues of $L$ so that the standard (default) order of the eigenvalues is $\lambda_1<\lambda_2<\dots<\lambda_n$, then the corresponding phase portrait can be identified with the Hasse diagramm of the Bruhat order. Otherwise, one should multiply the elements of $S_n$ in Hasse diagramm by the corresponding permutation. In previous sections we have described the low-dimensional case ($n=3,\,4$), where this statement can be proved by direct inspection. It is our purpose now to prove this statement for arbitrary $n$.

\subsubsection{Supporting remarks}
Before we give the full proof of the theorem \ref{theomag}, let us give few observations to support it. We begin with the following fact: if an element $\alpha\in S_n,\ n\ge3$ precedes $\beta$ with respect to Bruhat order, so that there is no other elements of $S_n$ between them, then there is an element $i\in\{1,\dots,n\}$ such that $\alpha(i)=\beta(i)$. This is an almost direct consequence of the definition of Bruhat order (see section \ref{sectorder}), since such neighbors always differ only by a single swap of two elements.

Now we reason by induction on $n$: suppose that for all $k\le n$ we know that any two elements in $S_k$, which are direct neighbors in Bruhat order (i.e. there are now other elements between them), are connected by a trajectory of Toda flow inside $SO_k(\mathbb R)/SO_k(\mathbb R)\bigcap B^+_k=Fl_k(\mathbb R)$.

Consider the singular points $\tilde s_i\in S_{n+1}$ of the system in dimension $n+1$. We shall always regard the elements of $S_{n+1}\subset FL_n(\mathbb R)$ as the classes of equivalences of matrices inside $SO_{n}(\mathbb R)$; similarly, we shall speak rather about the submanifolds in $SO_{n+1}(\mathbb R)$, and not about their images in $Fl_{n+1}(\mathbb R)$. Thus one can consider $(n+1)^2$ submanifolds $SO_n^{ij}\subseteq SO_{n+1}(\mathbb R)$ isomorphic to the group $SO_n(\mathbb R)$: they are defined in the same way we have previously defined the submanifolds $SO_3^{ij}\subset SO_4(\mathbb R)/SO_4(\mathbb R)\bigcap B^+_4$. In particular, there are $n+1$ subgroups $SO_n^i(\mathbb R)\subseteq SO_{n+1}(\mathbb R)$ isomorphic to $SO_n(\mathbb R)$. As before it is evident, that the restriction of the Toda system to these subgroups can be identified with the Toda system on $SO_n(\mathbb R)$, and hence the trajectories of the restriction of Toda system to any of $SO_n^i(\mathbb R)$ connect two points inside the corresponding part of $S_n$ if and only if they are comparable with respect to the Bruhat order in $S_n$. The same is true for the restrictions of this system to all the other submanifolds $SO_n^{ij}\subset SO_{n+1}(\mathbb R)$.

Thus we see that \textit{any two elements $\alpha,\,\beta\in S_{n+1}$ belong to one submanifold $SO_n^{ij}$ for some $i,j=1,\dots,n+1$ if and only if there exist an element $1\le i\le n+1$, such that $\alpha(i)=j=\beta(i)$}. Now the following is true by definition of Bruhat order: every two points $\alpha,\,\beta$, which are direct neighbours with respect to the Bruhat order in $S_{n+1}$, belong to the same $SO_n^{ij}$ for certain $i,j$. Since Bruhat order is hereditary in the sense that the restriction of this order from $S_{n+1}$ to $S_n\subseteq S_{n+1}$, which is embedded as the set of all permutations preserving certain element, coincides with the Bruhat order on $S_n$ (though some nondirect neighbours can become direct in smaller subgroup) we conclude from the observation above, that $\alpha$ and $\beta$ shall be connected by a trajectory of Toda flow on $SO_{n+1}(\mathbb R)$.

Thus we have proved, that \textit{any two neighbors in Bruhat order will be always connected by a Toda trajectory}. However, we still do not know what happens, if these two points are not neighbors in this order. And we cannot tell anything about the dimension of the space of all trajectories between two points.

One can try to use similar ideas to prove the reverse statement (i.e. to show, that the points that are not comparable with respect to the Bruhat order cannot be connected by a trajectory). Instead of this, let us proceed by proving the general statement.

\subsubsection{Proof of the theorem \ref{theomag}}
\begin{proof}
As we have mentioned, \textit{Toda system preserves varieties given by the equations $M_{\frac{1\dots k}{i_1\dots i_k}}=0$}, where $M_{\frac{1\dots k}{i_1\dots i_k}}$ is the determinant of the submatrix of $\Psi$ spanned by the intersections of the first $k$ rows and the columns $i_1,\dots,i_k$. Hence, if equations which determine the Bruhat cell are of this form, then this Bruhat cell is preserved by the Toda flow. In this case any point in such cell $B_\alpha$ can flow only to a point that lies in the closure of this cell, which consists of the cells corresponding to $\beta\in S_{n+1}$ such that $\beta<\alpha$ in Bruhat order.

However not all Schubert cells are defined by equations of the necessary form, since there can be rectangular submatrices of type $k\times l$ with $k<l$ (see section \ref{sectorder}). Thus we shall need the following proposition to be able to handle all Bruhat cells:
\begin{lem}
If the (upper-left) submatrix $\Psi^{i,j}(0)$ of $\Psi(0)$ has rank less than $\min(i,j)$, then the same is true for all $t$: $\mathrm{rk}\,\Psi^{i,j}(t)<\min(i,j)$.
\end{lem}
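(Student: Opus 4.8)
The plan is to reduce the statement to the invariance of the \emph{minor surfaces} of the Toda flow established in paragraph~\ref{Toda_setting}: for every $k$ and every $k$-element column set, the zero locus of the minor of $\Psi$ formed by the first $k$ rows (and those columns) is preserved by the flow, and likewise for minors formed by the last $k$ rows. Since an intersection of flow-invariant sets is again flow-invariant, and since the condition $\mathrm{rk}\,\Psi^{i,j}<\min(i,j)$ is equivalent to the simultaneous vanishing of all $\min(i,j)\times\min(i,j)$ minors of $\Psi^{i,j}$, it is enough to exhibit this system of minors as one whose members each cut out an invariant hypersurface.

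First I would dispose of the case $i\le j$. Here the relevant minors are $\det\Psi[\{1,\dots,i\};C]$ (the submatrix of $\Psi$ on rows $1,\dots,i$ and an $i$-element column set $C\subseteq\{1,\dots,j\}$), and each of these is a minor built from the first $i$ rows of $\Psi$; hence its zero set is preserved, and so is their intersection, which is precisely $\{\mathrm{rk}\,\Psi^{i,j}<i\}$. The hard part is the ``tall'' case $i>j$, where the $j\times j$ minors of $\Psi^{i,j}$ involve only $j$ of the first $i$ rows and so are not among the admissible minors. Here I would exploit that $\Psi(t)\in SO_n(\mathbb{R})$ at every instant, together with the following fact about orthogonal matrices: the corank of the upper-left block $\Psi^{i,j}$ equals the corank of the complementary lower-right block $Q:=\Psi[\{i+1,\dots,n\};\{j+1,\dots,n\}]$. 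To see this, write $v_1,\dots,v_n$ for the orthonormal rows of $\Psi$, put $E_i=\langle v_1,\dots,v_i\rangle$ (so $E_i^{\perp}=\langle v_{i+1},\dots,v_n\rangle$), and let $F_q,\tilde F_{n-q}$ be the coordinate subspaces spanned by the first $q$, resp. last $n-q$, basis vectors; then $\mathrm{rk}\,\Psi^{i,j}=i-\dim(E_i\cap\tilde F_{n-j})$ and $\mathrm{rk}\,Q=(n-i)-\dim(E_i^{\perp}\cap F_j)$, while $(E_i\cap\tilde F_{n-j})^{\perp}=E_i^{\perp}+F_j$ forces $\dim(E_i\cap\tilde F_{n-j})-\dim(E_i^{\perp}\cap F_j)=i-j$; combining these yields $\min(i,j)-\mathrm{rk}\,\Psi^{i,j}=\min(n-i,n-j)-\mathrm{rk}\,Q$.

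With this identity in hand the tall case folds back into the easy one: because $i>j$ the block $Q$ is wide ($n-i<n-j$), so $\mathrm{rk}\,Q<n-i$ is equivalent to the vanishing of all $(n-i)\times(n-i)$ minors of $Q$, and these are exactly the minors $\det\Psi[\{i+1,\dots,n\};C]$ with $C$ an $(n-i)$-element subset of $\{j+1,\dots,n\}$ — that is, minors built from the last $n-i$ rows of $\Psi$, whose zero sets are flow-invariant. Hence $\{\mathrm{rk}\,Q<n-i\}$ is flow-invariant, and since the corank identity holds pointwise along the trajectory (as $\Psi(t)$ remains orthogonal) this set coincides with $\{\mathrm{rk}\,\Psi^{i,j}<j\}$. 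Thus $\mathrm{rk}\,\Psi^{i,j}(0)<\min(i,j)$ propagates to all $t$, as required.

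I expect the only genuine obstacle to be this passage through the complementary block: in the tall case the rank-deficiency of $\Psi^{i,j}$ is witnessed by minors that omit rows, hence is invisible to the minor surfaces directly, and one really does need orthogonality to convert it into a deficiency of a wide block whose defining minors are of the admissible ``last $k$ rows'' type. The remaining steps are routine linear algebra; note that the same argument, applied to the appropriate complementary block, also handles the analogous rank conditions on upper-right submatrices, which is what one needs when dealing with the ordinary (rather than dual) Schubert cells.
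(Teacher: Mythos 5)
Your argument is correct, and in the essential case it follows a genuinely different route from the paper's. The square/wide case is handled the same way in both: the relevant $\min(i,j)\times\min(i,j)$ minors occupy a full initial block of rows, so each one satisfies $M'=fM$ and cuts out an invariant hypersurface, and the rank condition is the intersection of these. For the tall case the paper stays inside the upper-left block: it computes the explicit evolution law for the minors built from a full initial set of columns and an arbitrary row set, observes that the derivative of such a minor is a linear combination of minors of the same family in which one row index has been \emph{decreased}, so that the collection of minors with rows confined to the given initial segment satisfies a closed (triangular) linear ODE system, and concludes that a zero initial condition propagates for all $t$. You instead use that $\Psi(t)$ stays orthogonal (which the flow preserves, $A$ being antisymmetric) to prove the corank identity $\min(i,j)-\mathrm{rk}\,\Psi^{i,j}=\min(n-i,n-j)-\mathrm{rk}\,Q$ for the complementary lower-right block $Q$ --- your dimension count via $E_i\cap\tilde F_{n-j}$ and $E_i^{\perp}\cap F_j$ checks out, both coranks being equal to $\dim(E_i^{\perp}\cap F_j)$ in the tall case --- and thereby fold the tall case into the wide one, where only the individually invariant ``full last $k$ rows'' minors are needed. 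Your route is shorter and avoids deriving the minor ODE, at the price of invoking orthogonality at this step (no real loss: the paper's computation also uses $L=\Psi\Lambda\Psi^{T}$); the paper's computation has the side benefit of recording the explicit Pl\"ucker-coordinate dynamics, which makes the triangular structure visible. One cosmetic caveat: the paper's indexing of $\Psi^{i,j}$ is transposed relative to yours (its easy case is the ``horizontal'' one $i\ge j$), so the labels of the easy and hard cases are swapped, but this does not affect the substance.
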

\begin{proof}
The proof is based on the following observations: first, to say, that $\mathrm{rk}\,\Psi^{i,j}(t)<\min(i,j)$ is equivalent to impose the condition that all the minors of the $\min(i,j)\times\min(i,j)$ submatrices in $\Psi(t)$ are equal to $0$. If $i=j$ or $i>j$ (i.e. the matrix is square or "horizontal"), then the conclusion of the theorem follows at once from the fact, that equations $M_{\frac{1\dots k}{i_1\dots i_k}}(t)=0$ hold identically, if they are verified for $t=0$. If $i<j$, we need to work a little more.

Namely consider the minors of $\Psi$, which are equal to the intersection of the first $k<n$ columns of $\Psi$ with an arbitrary set of $k$ rows. Then a direct computation shows that the following equations hold
\begin{equation}
\begin{split}
\label{minors}
M^{'}_{\frac{i_{1},i_{2},...,i_{k}}{1,2,...,k}}& = (-\sum_{l=1}^{k}a_{i_{l}i_{l}} + \sum_{j=1}^{k}\lambda_{j})
M_{\frac{i_{1},i_{2},...,i_{k}}{1,2,...,k}}\\
&\quad-2\sum_{s=1}^{k}\sum^{i_{s}-1}_{i=1}a_{ii_{s}}(-1)^{\sigma(p)}M_{\frac{i_{1},...,i,...\hat{i}_{s},...,i_{k}}{1,2,...,k}},
\end{split}
\end{equation}
where $i_{1}...i...\hat{i}_{s}...i_{k}$ is a permutation $p$ of the set $i_{1},...,i,...\hat{i}_{s},...,j_{k}$, (a hat above an element means, that this element is omitted) $\sigma(p)$ is the parity of $p$, i.e. the number of inversions in $p$. The indices $i_k$ in the sets $i_{1},i_{2},...,i_{k}$ and $i_{1},...,i,...\hat{i}_{s},...,i_{k}$ are increasing (recall, that the numerator $a$ in the index $\frac{a}{b}$ corresponds to the rows and $b$ to the columns of $\Psi$, and $a_{i_{r}i_{s}}$ is an entry of the Lax matrix $L$). To prove this formula consider the minor $M_{\frac{i_{1},i_{2},...,i_{k}}{1,2,...,k}}$. It is equal to the following polynomial
\beq{minors-det}
\begin{array}{c}
M_{\frac{i_{1},i_{2},...,i_{k}}{1,2,...,k}} = \sum_{p}(-1)^{\sigma(p)}\psi_{i_{1}j_{1}}\psi_{i_{2}j_{2}}...\psi_{i_{k}j_{k}},
\end{array}
\eq
where $j_{1}j_{2}...j_{k}$ is a permutation $p$ of the set $1,2,...k$ and $\sigma(p)$ is the number of inversions in this permutation. Differentiate this expression with respect to $t$ using the formula
\begin{align}
\label{psi-d-1}
\psi^{'}_{ij}&=(-a_{ii} + \lambda_{j})\psi_{ij}-2\sum^{i}_{k=1}a_{ki}\psi_{kj}.\\
\intertext{Change the subscripts so that the formula for dynamics $\psi$ takes the following form:}
\label{psi-d-2}
\psi^{'}_{i_{s}j_{s}}&=(-a_{i_{s}i_{s}} + \lambda_{j_{s}})\psi_{i_{s}j_{s}}-2\sum^{i_{s}-1}_{i=1}a_{ii_{s}}\psi_{ij_{s}};\\
\intertext{Finally take the formula}
\label{minors-2}
\begin{split}
M^{'}_{\frac{i_{1},i_{2},...,i_{k}}{1,2,...,k}}& = \sum_{p}(-1)^{t(p)}(\psi^{'}_{i_{1}j_{1}}\psi_{i_{2}j_{2}}...\psi_{i_{k}j_{k}}\\
&\qquad\quad+\psi_{i_{1}j_{1}}\psi^{'}_{i_{2}j_{2}}...\psi_{i_{k}j_{k}}+...+\psi_{i_{1}j_{1}}\psi_{i_{2}j_{2}}...\psi^{'}_{i_{k}j_{k}}),
\end{split}
\end{align}
substitute (\ref{psi-d-2}) in (\ref{minors-2}) and after regrouping we get
\begin{equation}
\label{minors-3}
\begin{split}
M^{'}_{\frac{i_{1},i_{2},...,i_{k}}{1,2,...,k}}& = \sum_{p}(-1)^{t(p)}((-\sum_{l=1}^{k}a_{i_{l}i_{l}} + \sum_{j=1}^{k}\lambda_{j})\psi_{i_{1}j_{1}}\psi_{i_{2}j_{2}}...\psi_{i_{k}j_{k}}\\
&\quad+(-2\sum^{i_{1}-1}_{i=1}a_{ii_{1}}\psi_{ij_{1}})\psi_{i_{2}j_{2}}...\psi_{i_{k}j_{k}}+\psi_{i_{1}j_{1}}(-2\sum^{i_{2}-1}_{i=1}a_{ii_{2}}\psi_{ij_{2}})\psi_{i_{3}j_{3}}...\psi_{i_{k}j_{k}}\\
&\quad+...+\psi_{i_{1}j_{1}}\psi_{i_{2}j_{2}}...\psi_{i_{k-1}j_{k-1}}(-2\sum^{i_{k}-1}_{i=1}a_{ii_{k}}\psi_{ij_{k}})),
\end{split}
\end{equation}
which is equivalent to \eqref{minors}.

Now formula \eqref{minors} shows that the derivative of $M_{\frac{i_{1},i_{2},...,i_{k}}{1,2,...,k}}$ can be expressed in the terms of minors $M_{\frac{i'_{1},i'_{2},...,i'_{k}}{1,2,...,k}}$, where at least one of the entries $i'_l$ is less than the corresponding $i_l$. Hence we can use induction to show that the equations $M(t)_{\frac{i_{1},i_{2},...,i_{k}}{1,2,...,k}}=0$ hold identically for all $t$.
\end{proof}

The following statement is a direct corollary of this lemma:
\begin{cor}
Toda flow preserves Schubert cells (in other words, the corresponding vector field is tangent to Schubert cells).
\end{cor}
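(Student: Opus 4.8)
The plan is to read the corollary off directly from the Lemma together with the differential relation \eqref{minors} established in the proof of Theorem \ref{theomag}. Recall (Definition \ref{defischu2}) that a Schubert variety $\bar X_w$, viewed inside $SO_n(\mathbb R)$ through the matrices $\Psi$ that represent the flags, is the common zero locus of a finite family of minors of $\Psi$: for each pair $(p,q)$ one demands that all $(r+1)$-minors of the block of $\Psi$ on rows $1,\dots,p$ and columns $q+1,\dots,n$ vanish, where $r=\mathrm{rk}\,W_{p,q}$ is read off from the permutation matrix of $w$. Hence it suffices to prove that the zero set of each such minor is flow-invariant; then $\bar X_w$, being an intersection of invariant sets, is invariant, and so is the open cell $X_w=\bar X_w\setminus\bigcup_{v<w}\bar X_v$, since for each fixed time the flow is a diffeomorphism and therefore maps the difference of two invariant sets onto itself. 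Invariance of the submanifold $X_w$ under a complete flow on a compact manifold is exactly tangency of the vector field to $X_w$. For the dual cells $\Omega_w$ one argues identically, using the blocks on rows $1,\dots,p$ and columns $1,\dots,\tilde q$.

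It then remains to check that each minor in the above list has a shape to which \eqref{minors} applies. If the minor uses all of the rows $1,\dots,p$ — the corank-one situation — it is of the form $M_{\frac{1\dots k}{i_1\dots i_k}}$ and its vanishing is preserved directly by \eqref{minors}, which has the form $M'=f(\lambda,\psi)\,M$. If the row-set is only a proper subset of an initial segment, one invokes the Lemma: the point of its proof is the recursion \eqref{minors} for minors $M_{\frac{i_1\dots i_k}{1\dots k}}$, which expresses $M'$ through $M$ and through minors with strictly smaller row-indices and the \emph{same} column-set, so that such minors obey a triangular linear system and their simultaneous vanishing at $t=0$ propagates to all $t$. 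I would stress here that the derivation of \eqref{minors} never uses that the column-set is $\{1,\dots,k\}$: for an arbitrary fixed column-set $c_1<\dots<c_k$ it only replaces the scalar coefficient by $\sum_s\lambda_{c_s}$ and leaves the triangular shape intact. This is precisely what lets one trap, besides corank-one conditions, the rank conditions on the ``horizontal'' upper-right (resp. upper-left) blocks defining $\bar X_w$ (resp. $\bar\Omega_w$), even when the prescribed corank is large, and hence closes the reduction of the first paragraph.

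The only point that is not pure dynamics, and where I expect the small amount of genuine work to sit, is matching the combinatorial description of the defining conditions of a Schubert variety against the admissible minor shapes — i.e. checking that these conditions can always be written as the vanishing of minors supported on an initial segment of rows together with an arbitrary column-set (and, for the dual cells, on an arbitrary row-set together with an initial segment of columns), so that the recursion \eqref{minors} does apply to each of them. This comes out of the presentation of Schubert varieties by rank inequalities on upper-right submatrices, all of whose row-indices automatically form an initial segment $\{1,\dots,p\}$; but it is worth spelling out carefully, since it is the hinge on which the whole reduction turns, and it is the reason the literal statement of the Lemma (upper-left blocks) has to be read together with the column-independence of the recursion noted above.
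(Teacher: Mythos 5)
Your proposal is correct and follows essentially the same route as the paper: reduce the invariance of a Schubert (or dual Schubert) cell to the invariance of the vanishing loci of the defining minors, and then propagate that vanishing in time via the triangular recursion \eqref{minors}, exactly as the paper does when it says to ``apply this lemma to the matrix equal to $\Psi$ with few omitted columns.'' You merely spell out two points the paper leaves implicit --- that the recursion is insensitive to the choice of column-set (only the scalar coefficient changes), and that invariance of the closed varieties under the flow yields invariance, hence tangency, for the open cells --- which is a welcome clarification but not a different argument.
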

In fact, the lemma shows that the system of equations which define the Schubert cell is preserved by the Toda flow: this is evident for the case, when the rank of the "vertical"\ submatrix differs from the maximal by $1$. In a more general case, we should apply this lemma to the matrix, equal to $\Psi$ with few omitted columns.

If we define the \textit{dual Schubert cells} as the sets of matrices for which the conditions, similar to the definition \ref{defischu1}, hold for the \textit{lower left} submatrices (see section \ref{sectorder}), then we obtain a similar statement
\begin{cor}
Toda flow preserves dual Schubert cells.
\end{cor}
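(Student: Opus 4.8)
The plan is to imitate, almost verbatim, the proof of the previous corollary, since all the analytic work has already been done in the Lemma. First I would write out the polynomial equations cutting out a dual Schubert variety $\bar\Omega_w$: by the matrix description in \S\ref{sectorder} (Definition \ref{defischu2} together with the discussion of the opposite flag $\tilde F_q$), a flag belongs to $\bar\Omega_w$ iff, for a representative $\Psi\in SO_n(\mathbb R)$, every rectangular submatrix of $\Psi$ occupying the first $p$ rows and the first $q$ columns has rank at most the rank of the corresponding submatrix of the permutation matrix $W$; equivalently, a finite collection of minors of $\Psi$ must vanish. A bound ``$\mathrm{rk}\le r$'' is reduced, exactly as in the proof of the previous corollary, to the non-maximality of the rank of an auxiliary matrix obtained from $\Psi$ by deleting all but $r+1$ of the relevant columns, so it suffices to invoke the Lemma for $\Psi$ and for such column-truncations of $\Psi$.

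The only point that needs checking is that the Lemma — more precisely the computation \eqref{minors-det}--\eqref{minors-3} behind it — goes through unchanged after deleting some columns of $\Psi$. This is indeed the case: replacing the column set $\{1,\dots,k\}$ in \eqref{minors} by an arbitrary set $\{j_1,\dots,j_k\}$ only alters the scalar coefficient (the term $\sum_{j=1}^k\lambda_j$ becomes $\sum_s\lambda_{j_s}$), whereas the ``lowering'' terms on the right-hand side still act on the row indices alone; thus the evolution system for the family of minors built from a fixed column set retains its triangular structure with respect to the row multi-indices, and the common vanishing of all minors whose row multi-index lies in a down-closed set is preserved. Running the induction of the Lemma then shows that the rank of each prescribed submatrix of $\Psi(t)$ stays below its bound for all $t$, so $\bar\Omega_w$ is invariant under the flow, which is the claim.

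A conceptually cleaner alternative, which I would at least mention as a cross-check, is to deduce the statement directly from the corollary for ordinary Schubert cells by means of the involution $\iota\colon SO_n(\mathbb R)\to SO_n(\mathbb R)$, $\Psi\mapsto w_0\Psi w_0$, where $w_0$ is the antidiagonal permutation matrix. Conjugation by $w_0$ reverses both the row order and the column order, hence exchanges the relevant corners of a matrix, so $\iota$ descends to an involution of $Fl_n(\mathbb R)$ carrying each Schubert variety $\bar X_v$ to a dual Schubert variety ($\bar X_v\mapsto\bar\Omega_{w_0v}$), and thus mapping the collection of all Schubert varieties bijectively onto the collection of all dual Schubert varieties. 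A short computation with \eqref{Lax-GenToda}--\eqref{fieldso3}, using that $w_0$-conjugation interchanges the strictly-upper and strictly-lower triangular parts of a matrix, gives $\iota_*\bigl(A(\Psi,\Lambda)\Psi\bigr)=-A(\Psi,w_0\Lambda w_0)\Psi$, i.e. the image of the Toda field is minus the Toda field with the eigenvalues listed in the reversed order. Since the previous corollary holds for every ordering of the $\lambda_i$, and since tangency of a vector field to a fixed subvariety is insensitive to an overall sign, applying $\iota$ turns ``$A(\Psi,w_0\Lambda w_0)\Psi$ is tangent to every $\bar X_v$'' into ``$A(\Psi,\Lambda)\Psi$ is tangent to every $\bar\Omega_u$''. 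In either route the content is already contained in the Lemma; the main — and rather mild — obstacle is purely the bookkeeping: matching the dual-cell rank conditions to the correct corner and index order (and hence to the exact hypotheses of the Lemma), and, in the second route, verifying the sign and the eigenvalue-reversal in $\iota_*$ and that neither affects tangency.
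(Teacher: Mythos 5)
Your first route is essentially the paper's own argument: the paper dispatches this corollary with a one-line ``the same holds for the other corner'', resting on the Lemma together with the column-omission trick already used for the ordinary Schubert cells, and your explicit check that the minor evolution equation keeps its triangular structure in the row indices when the column set $\{1,\dots,k\}$ is replaced by an arbitrary fixed set is exactly the point that needs verifying (the column index enters only through the harmless scalar $\lambda_{j_s}$), so that route is correct and matches the paper. The $w_0$-conjugation cross-check is sound in spirit, but mind the corner bookkeeping you yourself flag: conjugation by $w_0$ carries the upper-right (Schubert) rank conditions to lower-left ones, whereas Definition \ref{defischu2} cuts out the dual cells by upper-left submatrices (the paper itself wavers between ``upper-left'' and ``lower left''), so that alternative as written lands on the lower-left convention and needs one more reflection to match the stated dual cells.
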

Now we can use the general facts about the Schubert cells and dual Schubert cells, listed in section \ref{sectorder}.

We shall also use the following general proposition:
\begin{prop}
\label{propmf}
Let $(M,\,g)$ be a Riemannian manifold, $N$ be a manifold, immersed in $M$ via a smooth full rank map $F$. Let $f$ be a function on $M$. Let $g'$ and $f'$ be the restrictions of $g$ and $f$ to $N$. Then
$$
D_F(grad_{g'}f')=proj^\perp_F(N)(grad_gf),
$$
where $D_F$ is the differential of the map $F$.
\end{prop}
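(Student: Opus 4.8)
The statement is the familiar fact that restricting a function to an immersed submanifold replaces its gradient by the tangential component; the only subtlety is that $F$ is merely an immersion, so the argument must be carried out pointwise. The plan is simply to check the defining variational property of the gradient directly. Fix $p\in N$ and put $q=F(p)$. Since $F$ has full rank, $D_F=D_pF\colon T_pN\to T_qM$ is injective, and by definition $g'_p(u,v)=g_q(D_Fu,D_Fv)$; in particular $g'$ is a genuine Riemannian metric on $N$, so $\mathrm{grad}_{g'}f'$ is well defined.

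For an arbitrary $v\in T_pN$, using the chain rule $d(f\circ F)=df\circ D_F$ and the definition of the gradient on $(N,g')$ and on $(M,g)$, one gets
$$
g'_p(\mathrm{grad}_{g'}f',v)=df'(v)=df(D_Fv)=g_q(\mathrm{grad}_g f,D_Fv),
$$
while from $g'=F^*g$ one also has
$$
g'_p(\mathrm{grad}_{g'}f',v)=g_q\bigl(D_F(\mathrm{grad}_{g'}f'),D_Fv\bigr).
$$
Subtracting, $g_q\bigl(D_F(\mathrm{grad}_{g'}f')-\mathrm{grad}_g f,\,D_Fv\bigr)=0$ for every $v\in T_pN$, i.e. the vector $D_F(\mathrm{grad}_{g'}f')-\mathrm{grad}_g f$ is $g_q$-orthogonal to the subspace $D_F(T_pN)\subseteq T_qM$. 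Since $D_F(\mathrm{grad}_{g'}f')$ itself lies in $D_F(T_pN)$, writing $\mathrm{grad}_g f$ as its $g_q$-orthogonal projection onto $D_F(T_pN)$ plus a normal part shows that $D_F(\mathrm{grad}_{g'}f')$ equals that projection, which is precisely the asserted identity $D_F(\mathrm{grad}_{g'}f')=\mathrm{proj}^\perp_{F(N)}(\mathrm{grad}_g f)$. As $p$ was arbitrary, this holds at every point.

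There is no real obstacle here; the one thing to keep track of is that, $F$ being an immersion and not necessarily an embedding, the projection must be understood as the orthogonal projection onto the image $D_F(T_pN)$ of the tangent space at each point, rather than onto ``$TN$'' regarded globally as a subbundle of $TM$. In the applications of this proposition $F$ will be the inclusion into $SO_n(\mathbb R)$ of one of the invariant submanifolds $SO^{ij}_n$ or of a minor surface, where $F$ is in fact an embedding, so this distinction is immaterial there and Proposition~\ref{propmf} says exactly that the restricted Toda vector field on such a submanifold is the tangential component of the ambient gradient $A(\Psi)\Psi$.
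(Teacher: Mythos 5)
Your proof is correct and is essentially the paper's argument recast in coordinate-free form: the paper also localizes at a point, chooses a $g$-orthonormal basis of the ambient tangent space adapted to $D_F(T_pN)$, and reads off both sides as vectors of partial derivatives, which is exactly the content of your duality computation $g_q\bigl(D_F(\mathrm{grad}_{g'}f')-\mathrm{grad}_g f,\,D_Fv\bigr)=0$ for all $v\in T_pN$. Your closing remark that for a mere immersion the projection must be taken pointwise onto $D_F(T_pN)$ is a useful precision the paper leaves implicit.
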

\begin{proof}
The statement of this proposition is local, it is enough to prove it for the tangent space at a point, so we can assume that $M=\mathbb R^a,\ N=\mathbb R^b$ and $F$ is a linear embedding. Choose orthonormal base for $g'$ in $\mathbb R^b$ and complete it to an orthonormal basis in $\mathbb R^a$ (with respect to $g$). Then in the corresponding coordinates $grad_gf$ is just the vector of partial derivatives $\left(\frac{\partial f}{\partial x^1},\dots,\frac{\partial f}{\partial x^a}\right)$ and $D_F(grad_{g'}f')=\left(\frac{\partial f}{\partial x^1},\dots,\frac{\partial f}{\partial x^b},0\dots,0\right)$, so that the statement is evident.
\end{proof}

We shall apply this statement to Schubert cells. One more statement that we shall need is the following:
\begin{prop}
\label{propdeMPe}
If the order of the eigenvalues in $\Lambda$ is given by $\lambda_1<\lambda_2<\dots<\lambda_n$, then the negative (resp. positive) eigenspace of the Hessian of the Morse function $F_n$ at $w$ is spanned by those positive roots of $SL_n$, which are mapped into negative (resp. positive) roots by $w$.
\end{prop}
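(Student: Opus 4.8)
The plan is to compute the Hessian of $F_n$ at a singular point directly in the root coordinates and read off the signs of its diagonal entries. As in Section~\ref{sectso33}, near a singular matrix $\Psi_0$ I would use the right--translated chart $\Psi=\exp(\Theta)\Psi_0$ with $\Theta\in\mathfrak{so}_n$, so that $\Psi\Lambda\Psi^{-1}=e^{\Theta}\Lambda_w e^{-\Theta}$, where $\Lambda_w:=\Psi_0\Lambda\Psi_0^{-1}$ is diagonal with $(\Lambda_w)_{kk}=\lambda_{w(k)}$, $w\in S_n$ being the permutation attached to $\Psi_0$. Expanding $F_n(\Psi)=\mathrm{Tr}(e^{\Theta}\Lambda_w e^{-\Theta}N)$ in powers of $\Theta$, the linear term is $\mathrm{Tr}([\Theta,\Lambda_w]N)=\mathrm{Tr}(\Theta[\Lambda_w,N])=0$ since $\Lambda_w$ and $N$ are both diagonal (this re--confirms that $\Psi_0$ is critical), and the quadratic term works out to
$$
q(\Theta)=\tfrac12\,\mathrm{Tr}([\Theta,[\Theta,\Lambda_w]]N)=-\tfrac12\,\mathrm{Tr}([\Theta,\Lambda_w]\,[\Theta,N]).
$$

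Next I would evaluate $q$ on the standard basis $\{E_{ij}-E_{ji}\}_{1\le i<j\le n}$ of $\mathfrak{so}_n$, the line spanned by $E_{ij}-E_{ji}$ being precisely the one attached to the positive root $e_i-e_j$ in the root decomposition of the tangent space of the flag manifold used in Section~\ref{sectorder}. Because $[E_{ij}-E_{ji},\Lambda_w]=(\lambda_{w(j)}-\lambda_{w(i)})(E_{ij}+E_{ji})$ and $[E_{ij}-E_{ji},N]=(N_{jj}-N_{ii})(E_{ij}+E_{ji})$, and because the matrices $E_{ij}+E_{ji}$ are pairwise orthogonal for the trace form, all cross terms vanish and $q$ comes out diagonal,
$$
q(\Theta)=-\sum_{1\le i<j\le n}(\lambda_{w(j)}-\lambda_{w(i)})\,(N_{jj}-N_{ii})\,\theta_{ij}^{2},
$$
which one checks reproduces the Hessian tables computed above for $n=3,4$. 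Since the diagonal entries of $N$ are distinct and strictly monotone and $\lambda_1<\dots<\lambda_n$, every coefficient is nonzero (hence every singular point is nondegenerate), and the sign of the coefficient of $\theta_{ij}^{2}$ depends only on whether $w$ preserves or reverses the order of the pair $i<j$ --- equivalently, on whether $w$ carries the positive root $e_i-e_j$ to a positive or to a negative root. Grouping the negative (resp.\ positive) coordinate directions accordingly identifies the negative (resp.\ positive) eigenspace of the Hessian with the span of the positive roots mapped by $w$ to negative (resp.\ positive) roots; in particular the Morse index of $\Psi_0$ equals the number of inversions of $w$, i.e.\ $\ell(w)$, matching the index computation of the first proposition of Section~\ref{sectso33}.

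The computation itself is short, so the only point needing real care is the matching of signs: whether the \emph{negative} eigenspace pairs with the roots $w$ sends to negative roots (rather than with those it sends to positive roots) depends on the direction of monotonicity fixed for $N$, on the dictionary between the singular matrix $\Psi_0$ and the permutation $w$ versus $w^{-1}$, and on the orientation of the gradient flow fixed in Section~\ref{Toda_grad}. I would pin $N$ down precisely --- with strictly decreasing diagonal, so that the Morse index at $\Psi_0$ is exactly $\ell(w)$ --- and keep these choices consistent with the conventions of Section~2, after which the statement holds as written; any residual relabelling is exactly what the clause ``after suitable permutation corresponding to the chosen order of eigenvalues'' in Theorem~\ref{theomag} is there to absorb. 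Finally, since this proposition is, for the $SL_n$ case, a result of de~Mari and Pedroni, one could alternatively just invoke \cite{deMariPedroni}; the direct computation above is worth including to keep the argument self--contained.
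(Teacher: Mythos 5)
Your computation is correct and follows essentially the same route as the paper: expand $F_n=\mathrm{Tr}(\Psi\Lambda\Psi^{-1}N)$ to second order in the right-translated chart $\Psi=e^{\Theta}\Psi_0$ and read off the diagonal quadratic form $\sum_{i<j}(\lambda_{w(i)}-\lambda_{w(j)})(N_{jj}-N_{ii})\,\theta_{ij}^2$, which is exactly the paper's formula \eqref{hesscalc}. Your repackaging of the second-order term as $-\tfrac12\mathrm{Tr}([\Theta,\Lambda_w][\Theta,N])$, and your explicit attention to the $w$-versus-$w^{-1}$ and monotonicity-of-$N$ sign conventions, are only organizational refinements of the paper's extraction of the diagonal parts of $-\Theta\Lambda_w\Theta+\tfrac12[\Theta^2,\Lambda_w]_{+}$ followed by ``direct inspection.''
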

This proposition (in a general case of arbitrary Cartan pair) was proven in the paper \cite{deMariPedroni}. For the sake of completeness we give here our prove of this fact in the case of the pair $Sl_n(\mathbb R)\supset SO_n(\mathbb R)$.

\begin{proof}
First of all let us fix some matrix $\tilde{s_k} \in \widetilde{S}_{n}$ (the definition of $\widetilde S_n$ is similar to the definitions of $\widetilde S_3$ and $\widetilde S_4$)
\beq{s_kn}  \tilde{s_k}= \left(
\begin{array}{c c c c c c}
 0 & \ast & 0 & ... &0\\
 0 & 0 & \ast &... & 0\\
 ... & ... & ... & ...& ...\\
 \ast & 0 & ... & ... &0\\
 0 & 0 & ... &  \ast & 0\\
\end{array}
\right).
\eq
The non-zero elements $\ast$ of $\tilde s_k$ are equal to $\pm1$. Conjugation with $\tilde s_k$ defines a permutation $w$ on the eigenvalues in the diagonal matrix $\Lambda$.

The matrix $\Psi$ in a neighborhood of $\tilde{s_k}$ has the following decomposition
\beq{decomppsi-2}
\Psi=\tilde{s_k}+\Theta\tilde{s_k}+\frac12\Theta^2\tilde{s_k}+o(\theta^2).
\eq
Here $o(\theta^2)$ denotes the sum of terms of degree 3 and higher in $\theta_{ij}$. Thus, we have
\beq{decomplax-2}
L(\Psi)=\Psi\Lambda\Psi^T=\tilde{s_k}\Lambda\tilde{s_k}^{-1}+ [\Theta, \tilde{s_k}\Lambda\tilde{s_k}^{-1}] - \Theta \tilde{s_k}\Lambda\tilde{s_k}^{-1} \Theta + \frac{1}{2}[\Theta^{2}, \tilde{s_k}\Lambda\tilde{s_k}^{-1}]_{+}+o(\theta^2),
\eq
where the matrix  $\Theta \in \mathfrak{so(n)}$ has the following form
\beq{angles}
\Theta= \left(
\begin{array}{c c c c c c}
 0 & \theta_{12} & ... & ... & \theta_{1n}\\
 -\theta_{12} & 0 & ... &... & \theta_{2n}\\
 ... & ... & ... & ...& ...\\
 -\theta_{1n-1} & -\theta_{2n-1} & ... & 0 & \theta_{n-1n}\\
 -\theta_{1n} & -\theta_{2n} & ... & -\theta_{n-1n} & 0\\
\end{array}
\right),
\eq

We shall be interested in
$$- \Theta \tilde{s_k}\Lambda\tilde{s_k}^{-1} \Theta + \frac{1}{2}[\Theta^{2}, \tilde{s_k}\Lambda\tilde{s_k}^{-1}]_{+} $$
to calculate the canonical form of the Hessian $d^2_{\tilde{s_k}}F_n$.
Consider the diagonal matrix $\widetilde{\Lambda_{w}}$
\beq{permut-lambda}
 \tilde{s_k}\Lambda\tilde{s_k}^{-1} = \widetilde{\Lambda_{w}}
\eq
then its matrix elements are $\widetilde{\lambda}_{ii}=\lambda_{w(i)}$. We need only diagonal parts of the terms $- \Theta \tilde{s_k}\Lambda\tilde{s_k}^{-1} \Theta$ and $+ \frac{1}{2}[\Theta^{2}, \tilde{s_k}\Lambda\tilde{s_k}^{-1}]_{+}$ because $F_n= Tr(LN)$, where the diagonal matrix $N$ has matrix elements $N_{ii}=i-1$. Put
\begin{align*}
\varphi^{1}=&- \Theta \tilde{s_k}\Lambda\tilde{s_k}^{-1} \Theta, & \varphi^{1}_{ii}&=\sum_{j \neq i} \theta_{ij}^{2}\lambda_{w(j)},\\
\varphi^{2}=&+ \frac{1}{2}[\Theta^{2}, \tilde{s_k}\Lambda\tilde{s_k}^{-1}]_{+}, & \varphi^{2}_{ii}&=-\sum_{j \neq i} \theta_{ij}^{2}\lambda_{w(i)},
\end{align*}
for $j=1,\dots,n$. Here we replace $\theta_{ij}$ by $\theta_{ji}$ if $i>j$. We compute
\begin{equation}
\label{hesscalc}
\begin{split}
d^2_{\tilde{s_k}}F_n&= Tr((\varphi^{1}+\varphi^{2})N)=\\
 &=\sum_{i<j} \theta_{ij}^{2}(\lambda_{w(j)}-\lambda_{w(i)})(i-1)+ \sum_{i<j} \theta_{ij}^{2}(\lambda_{w(i)}-\lambda_{w(j)})(j-1)=\\
 &= \sum_{i<j} \theta_{ij}^{2}(\lambda_{w(i)}-\lambda_{w(j)})(j-i).
\end{split}
\end{equation}
Thus we have expressed by direct calculation the quadratic approximation of $F_n$ near $\tilde{s_k}$ in coordinates $\theta_{ij}$. The statement of the lemma now follows from a direct inspection of this formula.
\end{proof}

Finally, let us put together all the statements: it follows from the propositions \ref{propmf} and \ref{propdeMPe} and the second property of the Schubert cells (see section \ref{sectorder}), that Schubert cell $X_w$ (resp. the dual Schubert cell $\Omega_w$) coincides with the unstable (resp. stable) submanifold of the Morse function $F_n$ at $w$. Indeed, from proposition \ref{propmf} it follows that the Morse flow of the restriction ${F_n}|_{X_w}$ (resp. ${F_n}|_{\Omega_u}$) coincides with the Toda flow. And from proposition \ref{propdeMPe} it follows, that $w$ is local maximum (resp. minimum) on $X_w$ (resp. $\Omega_w$). Now the proposition follows from the first property (see section \ref{sectorder}) of the Schubert cells and their duals.
\end{proof}

Finally observe, that we have actually showed, that Toda system is in fact a Morse-Smale system (see the definition \ref{MoSm}). Thus we obtain one more statement:
\begin{cor}
\label{corMS}
The dimension of the space, spanned by the trajectories connecting two singular points, is equal to the distance between these points in the Hasse graph of the Bruhat order.
\end{cor}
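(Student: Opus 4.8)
The plan is to read the corollary off from the identifications already obtained in the proof of Theorem~\ref{theomag}, together with the transversality dimension formula for Morse--Smale systems recalled in Section~2. Suppose two singular points are connected by a trajectory; by Theorem~\ref{theomag} the corresponding permutations $u,w\in S_n$ are Bruhat comparable, say $u<w$, and the trajectories run from $u$ to $w$. In the proof of Theorem~\ref{theomag} it was shown that the unstable submanifold of the Morse function $F_n$ at $w$ is the Schubert cell $X_w$, while the stable submanifold at $u$ is the dual Schubert cell $\Omega_u$; consequently the union of all trajectories joining $u$ to $w$ — that is, the space spanned by the trajectories connecting these two singular points — is exactly the intersection $X_w\cap\Omega_u$.

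Next I would invoke the Morse--Smale property of the Toda flow, also established in that proof (equivalently, the first of the two general facts about Schubert and dual Schubert cells recorded in Section~\ref{sectorder}): $X_w$ and $\Omega_u$ are nonempty and meet transversally precisely because $u<w$. Transversality makes the dimension formula from Section~2 applicable,
$$
\dim\bigl(X_w\cap\Omega_u\bigr)=\dim X_w+\dim\Omega_u-\dim Fl_n(\mathbb R).
$$
Substituting $\dim X_w=l(w)$, $\dim\Omega_u=\dim Fl_n(\mathbb R)-l(u)$ and $\dim Fl_n(\mathbb R)=\frac{n(n-1)}{2}$ gives $\dim(X_w\cap\Omega_u)=l(w)-l(u)$.

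Finally I would identify $l(w)-l(u)$ with the distance between $u$ and $w$ in the Hasse graph of the Bruhat order. This is the purely combinatorial fact that the Bruhat order on $S_n$ is graded by the length function $l$: by the defining relation \eqref{lw2}, every edge of the Hasse diagram joins elements whose lengths differ by exactly $1$, so every saturated chain from $u$ to $w$ has precisely $l(w)-l(u)$ edges, and in particular the graph distance between $u$ and $w$ is $l(w)-l(u)$; combining this with the previous computation proves the corollary (for non-comparable pairs there is no connecting trajectory at all by Theorem~\ref{theomag}, so the statement is vacuous there). There is no serious obstacle remaining: the substantive work — identifying the Schubert and dual Schubert cells with the unstable and stable manifolds of $F_n$ and verifying that the flow is Morse--Smale — was carried out inside the proof of Theorem~\ref{theomag}, and the only points needing care are bookkeeping ones, namely that $X_w\cap\Omega_u$ is genuinely the union of the connecting trajectories rather than a larger invariant set, and that transversality really holds so that this intersection is a smooth submanifold of the predicted dimension rather than merely bounded by it; both are guaranteed by the Morse--Smale property.
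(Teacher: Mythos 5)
Your proposal is correct and follows essentially the same route as the paper: identify the union of connecting trajectories with $X_w\cap\Omega_u$ via the stable/unstable--Schubert-cell identification from the proof of Theorem~\ref{theomag}, apply the transversality dimension formula, and use that the Bruhat order is graded by length so that $l(w)-l(u)$ is the Hasse-graph distance. The paper's own proof is just a terser statement of exactly these steps.
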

\begin{proof}
The statement follows from the transversality of the Schubert and dual Schubert cells and the fact, that the dimension of the Schubert (resp. dual Schubert) cell of $w$ is equal to the number of the elements, connecting it to the bottom (resp. top) element in the Bruhat diagramm.
\end{proof}

\begin{rem}\rm
In effect, as one sees, the asymptotics of the trajectory through a point of the flag space is determined by the Schubert cell and the dual Schubert cell, which contain the point. Namely, if the point belongs to the intersection of the dual Schubert cell $\Omega_\alpha$ and the Schubert cell $X_\beta$, then the trajectory of the Toda flow through the point goes from $\alpha$ to $\beta$.
\end{rem}

\begin{rem}\rm
Another approach to the question of describing the phase portrait of Toda flow was given in the paper \cite{FS}, where the authors use explicit formulas for the solutions to describe their asymptotic behaviour. One of the outcomes of their approach is a construction of a polytope, whose vertices (after suitable identifications) correspond to the elements of the Weyl group. One may ask, if the edges of this polytope reproduce the structure of the minimal edges of Bruhat order (one can see, that this is so for $SO_3(\mathbb R)$ case, which is proved in the cited paper).
\end{rem}

\begin{center}
\large{\bf Appendix}
\end{center}
\appendix
\section{Geometry of the Toda flow on $SO(3)$}
In this additional section we try to visualize the results, concerning the lowest dimensional case we considered before, that of $SO(3)$.

This group contains a normal subgroup $N=N_3$ of diagonal matrices (the first row of the list \eqref{classesS3}), which commutes with the matrix $\Lambda$. It follows, that the vector field $A(\Psi)$ does not depend on the choice of element in the right $N$\/-coset of $SO_3(\mathbb R)$. So we consider the Toda system as an evolution of points on factor-space $SO_3(\mathbb R)/N$. The singular points of this system in this setting will consist of $6$ cosets ($\tilde{s_i},\ i=1,\dots,6$), which we regard as the elements of the permutation group $S_3=W(\mathfrak{sl}_3)$. One can visualize the structure of this space and positions of the singular points on it, using the quaternionic presentation of $SO_3(\mathbb R)$:
$$SO_3(\mathbb R)= \mathbb H_1 / (\mathbb Z/2\mathbb Z).$$
Here $\mathbb H_1$ denotes the unit quaternionic sphere and $\mathbb Z\left/2\mathbb Z\right.$ acts on it by $\pm1$. Then the factor space $SO_3(\mathbb R)/N$ is equal to the factorization of $S^3$ by the preimage $\tilde N$ of $N$ in $\mathbb H$. One can find the explicit quaternions, representing the elements of $\tilde N$:
\begin{align*}
{}& \pm 1, & {}&\pm\mathbf i, & {}&\pm\mathbf j, & {}&\pm\mathbf k,
\end{align*}
then the factor-space $\mathbb H_1/\tilde N$ can be identified with the factorisation of the spherical Voronoy set of $1\in\mathbb H$ (with respect to the other elements of $\tilde N$) along its boundary, see figure \ref{fig:Graph-6_1}. Recall, that the \textit{spherical Voronoy set} of a point $x$ in a finite set $X\subset S^3$ consists of all points of $S^3$, which lie closer to $x$ than to any othe point in $X$ with respect to the spherical distance.
\begin{figure}[!t]
\begin{center}
\includegraphics[scale=.45]{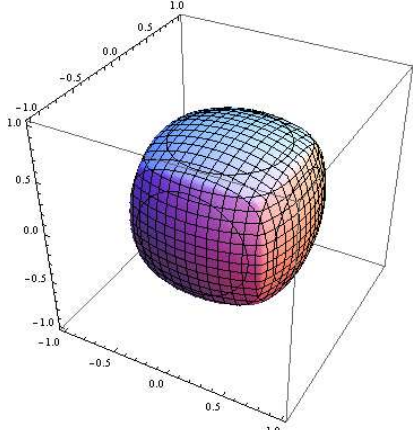}
	\caption{The Voronoy set of $\tilde N$}
	\label{fig:Graph-6_1}
\end{center}
\end{figure}
As one sees, this domain is (combinatorically) a cube. The action of $\tilde N$ identifies the opposite faces of this cube after having turned them $90^\circ$ around their centers. It is also easy to describe the positions of the singular points of $A(\psi)$ at this set, see figure \ref{fig:fig3}: at this figure there are (after identifications) shown all $6$ singular points: $1$ point is in the center of the cube, $3$ points are the centers of the three pairs of opposite faces of the cube, and $2$ more points are given by the orbits of the cube's vertices under the action, which identifies the opposite faces, turned by $90^\circ$.
\begin{figure}[!t]
\begin{center}
		\includegraphics[scale=.85]{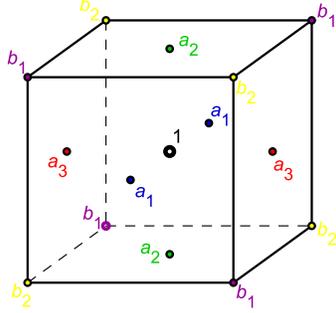}
	\caption{The fundamental domain of $N$, positions of the singular points}
	\label{fig:fig3}
\end{center}
\end{figure}

There are six Schubert cells $X_w$ in this case and we can write down the systems of equations, that determine each of them. In fact, we have a cellular decomposition
\beq{decomposition}
FL_{3}(\mathbb R) = \bigcup_{w \in \widetilde S_3} X_{w}
\eq
of the flag manifold. As we have explained in section \ref{sectorder}, we can describe the preimages of these cells in $SL_3(\mathbb R)$ with the help of systems of equations on matrix entries (these preimages are often called \textit{the matrix Schubert varieties}). Here we describe the intersections of these preimages with $SO_3(\mathbb R)\subset SL_3(\mathbb R)$.

For each permutation $w_{i}$ we have the orthogonal matrix $\tilde{s_{i}}$:
\begin{align*}
\tilde{s_{1}}:\begin{pmatrix} \ast & 0 & 0\\
                0 & \ast & 0 \\
                0 & 0 & \ast \end{pmatrix}, \ \ \
\tilde{s_{2}}:\begin{pmatrix} 0 & \ast & 0\\
                \ast & 0 & 0 \\
                0 & 0 & \ast \end{pmatrix}, \ \ \
\tilde{s_{3}}:\begin{pmatrix} 0 & 0 & \ast\\
                0 & \ast & 0 \\
                \ast & 0 & 0 \end{pmatrix}, \\
\tilde{s_{4}}:\begin{pmatrix} \ast & 0 & 0\\
                0 & 0 & \ast \\
                0 & \ast & 0 \end{pmatrix}, \ \ \
\tilde{s_{5}}:\begin{pmatrix} 0 & \ast & 0\\
                0 & 0 & \ast \\
                \ast & 0 & 0 \end{pmatrix}, \ \ \
\tilde{s_{6}}:\begin{pmatrix} 0 & 0 & \ast\\
                \ast & 0 & 0 \\
                0 & \ast & 0 \end{pmatrix},
\end{align*}
where $\ast$ takes the values $\pm1$. The corresponding rank matrices and polynomials cutting out the matrix Schubert varieties inside $SO_{3}(\mathbb R)$ have the following form:
\begin{align*}
rk(\tilde{s_{1}})&=\begin{pmatrix} 1 & 1 & 1\\
                1 & 2 & 2 \\
                1 & 2 & 3 \end{pmatrix}, \ \mathrm{no \ polynomials}, \\
rk(\tilde{s_{2}})&=\begin{pmatrix} 0 & 1 & 1\\
                1 & 2 & 2 \\
                1 & 2 & 3 \end{pmatrix}, \ \psi_{11}=0,\\
rk(\tilde{s_{3}})&=\begin{pmatrix} 0 & 0 & 1\\
                0 & 1 & 2 \\
                1 & 2 & 3 \end{pmatrix}, \ \psi_{11}=0,\ \psi_{12}=0,\ \psi_{21}=0,\ M_{\frac{12}{12}}=0,\\
rk(\tilde{s_{4}})&=\begin{pmatrix} 1 & 1 & 1\\
                1 & 1 & 2 \\
                1 & 2 & 3 \end{pmatrix},  \ M_{\frac{12}{12}}=0,\\
rk(\tilde{s_{5}})&=\begin{pmatrix} 0 & 1& 1\\
                0 & 1 & 2 \\
                1 & 2 & 3 \end{pmatrix},  \ \psi_{11}=0,\ \psi_{21}=0,\ M_{\frac{12}{12}}=0,\\
rk(\tilde{s_{6}})&=\begin{pmatrix} 0 & 0 & 1\\
                1 & 1 & 2 \\
                1 & 2 & 3 \end{pmatrix}, \ \psi_{11}=0,\ \psi_{12}=0,\ M_{\frac{12}{12}}=0.
\end{align*}
As one can see, in many cases the recipe, that we use, produces more equations than one really needs to describe the cell. Thus the dimensions of the the cells should be calculated with care: this list does not give them "on the nose".

\bigskip
\addcontentsline{toc}{section}{\numberline{}Acknowledgments}
\paragraph{Acknowledgments}
The work of Yu.B. Chernyakov was supported in part by grants RFBR-12-02-00594 and by the Federal Agency for Science and Innovations of Russian Federation under contract 14.740.11.0347. The work of G.I. Sharygin was supported by the Federal Agency for Science and Innovations of Russian Federation under contract 14.740.11.0081. The work of A.S. Sorin was supported in part by the RFBR Grants No. 11-02-01335-a and No. 11-02-12232-ofi-m-2011.

\end{document}